\documentclass[
  a4paper,
  UKenglish,
  cleveref,
  autoref,
  thm-restate
]{lipics-v2021}

\pdfoutput=1
\hideLIPIcs
\nolinenumbers
\usepackage[table]{xcolor}
\usepackage{graphicx}

\title{Towards Decentralized Searcher Competition in MEV Markets}
\titlerunning{Towards Decentralized Searcher Competition}

\author
  {Roozbeh Sarenche}
  {COSIC, KU Leuven, Leuven, Belgium}
  {roozbeh.sarenche@esat.kuleuven.be}
  {}
  {Supported by the Flemish Government through the Cybersecurity Research Program Flanders with grant number VOEWICS02.}

\author
  {Yunwen Liu}
  {COSIC, KU Leuven, Leuven, Belgium}
  {yunwen.liu@esat.kuleuven.be}
  {}
  {Partially supported by KU Leuven STG/25/039 and CTVG R\&D grant (QJW-410118-LIU-CTVG).}

\authorrunning{R. Sarenche and Y. Liu}

\Copyright{Roozbeh Sarenche and Yunwen Liu}

\ccsdesc{Theory of computation~Algorithmic game theory and mechanism design}
\ccsdesc{Security and privacy~Economics of security and privacy}

\keywords{
Maximal extractable value,
MEV,
searcher competition,
mechanism design,
Sybil resistance,
Shapley value,
reward decentralization
}

\begin{document}

\maketitle

\begin{abstract}
Centralization in maximal extractable value (MEV) markets is a significant concern for blockchain systems, as persistent concentration of economic power can weaken competition, reduce openness, and undermine the decentralization goals of permissionless protocols. While much of the existing analysis has focused on builders, validators, and block-building markets, this paper studies centralization from the perspective of searcher competition. We develop a heterogeneous model in which searchers differ in opportunity coverage and execution efficiency, and we analyze how auction design affects fairness, decentralization, and security among searchers competing for the same MEV opportunity.

To evaluate searcher competition, we introduce two metrics: a Shapley-weighted Jain fairness index, which measures whether rewards are proportional to searchers' marginal contributions, and an expected-reward Herfindahl--Hirschman Index (HHI), which measures concentration in long-run searcher rewards. Using these metrics, we first analyze the standard first-price, winner-take-all auction as a benchmark. Our analysis shows that, under searcher heterogeneity, first-price competition can reward rank dominance rather than marginal contribution, leading to concentrated rewards and weaker contribution-adjusted fairness.

Motivated by these limitations, we propose an entry-filtered Shapley-capped auction mechanism that distributes searcher rewards more fairly and broadly among admitted high-quality submissions. Designing such a mechanism in a permissionless blockchain environment is challenging: searchers may create Sybil identities by submitting copied or degraded versions of the same execution strategy, and validators may collude with searchers to increase joint payoff. We address these concerns through Bayesian security constraints for copied-code Sybil deviations and validator--searcher coalition deviations, and show that suitable parameter choices make the  Shapley-capped mechanism secure against both.

We complement the theoretical analysis with numerical benchmarks and Ethereum on-chain data. The theoretical benchmarks show that the proposed mechanism is especially effective in centralized opportunity classes, where first-price competition concentrates rewards among a small set of searchers.
They further indicate that the proposed mechanism is a potential improvement over first-price competition in empirically motivated opportunity types.
\end{abstract}

\section{Introduction}

In Ethereum's Maximal Extractable Value (MEV) supply chain, \textit{searchers} are specialized actors who identify profitable execution opportunities and submit transactions or bundles to capture them~\cite{DBLP:conf/sp/DaianGKLZBBJ20,flashbotsSearchingPostMerge}. 
Their submissions enter auction-like competition for block inclusion, where searchers bid part of the generated value to block producers---builders and validators.

% Prior work suggests that searcher-level MEV extraction is not a symmetric market, but is shaped by specialized bot behavior, unequal access to opportunities, and heterogeneous competitive pressure
% Prior work suggests that searcher-level MEV extraction is far from a symmetric market: searchers differ in opportunity access, latency, information, capital, and execution quality, and these differences are amplified by specialized bot behavior and heterogeneous competitive pressure~\cite{DBLP:conf/imc/WeintraubTNS22,DBLP:conf/ndss/LuoLLHLMSC26}. 
% For atomic MEV strategies, including sandwiching and arbitrage between decentralized exchanges, early studies already observed sophisticated bot specialization and intense competition~\cite{DBLP:conf/sp/DaianGKLZBBJ20}. 
% More recently, Heimbach et al.~\cite{DBLP:conf/sp/HeimbachPS24} showed that searcher activity can become highly concentrated in arbitrage markets: among searchers exploiting price discrepancies between centralized and decentralized exchanges, eleven searchers account for more than 80\% of the identified arbitrage volume, with the two largest searchers alone accounting for 49.2\%. 
% These findings suggest that profitable searcher markets can become highly skewed when opportunities require specialized infrastructure, private information, capital, or privileged access. 

Prior work suggests that searcher-level MEV extraction is far from a symmetric market. 
Searchers differ in opportunity access, latency, information, capital, and execution quality, while specialized bot behavior and heterogeneous competitive pressure further amplify these differences~\cite{DBLP:conf/ndss/LuoLLHLMSC26,DBLP:conf/imc/WeintraubTNS22}. 
For atomic MEV strategies, including sandwiching and decentralized-exchange arbitrage, early studies already documented sophisticated bot specialization and intense competition~\cite{DBLP:conf/sp/DaianGKLZBBJ20}. 
More recent evidence shows that concentration can be severe: in centralized--decentralized exchange arbitrage, Heimbach et al.~\cite{DBLP:conf/sp/HeimbachPS24} find that eleven searchers account for more than 80\% of identified arbitrage volume, with the two largest searchers alone accounting for 49.2\%. 

These observations suggest that searcher markets can become highly skewed in the presence of players with specialized infrastructure, private information, capital, or privileged access. 
This skewness, together with current searcher-auction design, has consequences beyond searcher profits and can become a bottleneck for blockchain decentralization. 
Persistent reward concentration weakens open competition and raises effective barriers to entry.
This concern is amplified on high-throughput chains where MEV competition often takes the form of large-scale speculative transaction submission~\cite{wang2026blockspace}. 
It can also reinforce builder-side concentration when valuable searcher flow is routed privately or integrated with particular builders~\cite{DBLP:conf/sp/HeimbachPS24,DBLP:conf/aft/OzSTM24,DBLP:conf/fc/PaiR24}. 
Searcher concentration can then propagate through the whole MEV supply chain: dominant searchers become preferred counterparties, builders with access to their flow gain an advantage, and validators face a block-building market shaped by privileged order-flow channels.

Most work on MEV concentration has focused on the block-building layer, analyzing how private order flow, vertical integration, latency advantages, and strategic bidding drive builder concentration under proposer-builder separation~\cite{DBLP:conf/fc/BahraniGR24a,DBLP:conf/aft/GuptaPR23,DBLP:conf/sp/HeimbachPS24,DBLP:conf/aft/OzSTM24,pahari2025exclusive,DBLP:conf/fc/PaiR24,DBLP:conf/www/WangHZHW025,DBLP:conf/icbc2/WuTLV24,DBLP:conf/icaif/WuTLV24,DBLP:conf/ecai/WuTLV25,DBLP:conf/sp/YangN025,zhang2026order}. 
Within this view, searchers are typically modeled merely as sources of transaction bundles or private flow that confer competitive advantages to builders. 
Consequently, the internal mechanics of the searcher auction market have received considerably less attention. 
While recent studies have begun to examine searcher mechanism design and bidding strategies~\cite{DBLP:conf/aft/MamageishviliSS24}, the specific implications of these auction formats for fairness and decentralization in the searcher market remain largely unexplored.
This motivates our central question: how fair and decentralized is the current first-price, winner-take-all searcher auction, and can an alternative mechanism design improve reward decentralization while preserving the rigorous security guarantees required for permissionless MEV competition?

%a game-theoretic framework that treats searcher competition as a measurable market structure, linking empirical patterns of MEV extraction to mechanism-level objectives such as open participation, decentralization, and mitigation of persistent searcher advantages.

Decentralizing rewards in a permissionless environment is fundamentally difficult.
If a mechanism pays multiple submissions, a searcher may split a single execution strategy into multiple identities, submitting copied or intentionally degraded versions of the same code.
Consequently, any mechanism that naively rewards participation inherently invites Sybil attacks.
Moreover, validators and searchers have strong incentives to coordinate outside the protocol if identity splitting can increase their joint payoff.
Therefore, a decentralized searcher auction must solve two problems simultaneously: it must distribute rewards more broadly and fairly, while strictly preserving the security guarantees required for permissionless MEV execution.

This paper studies that tradeoff. 
We develop a model of heterogeneous searcher competition, analyze the standard first-price auction, and introduce a Shapley-based mechanism as a more decentralized alternative. 
The main message is that the current winner-take-all design is secure against copied-code deviations, but can be economically centralizing under realistic searcher heterogeneity. 
By contrast, a carefully capped and burned Shapley-based mechanism can improve contribution-adjusted fairness and reward decentralization, while controlling Sybil and validator--searcher coalition incentives through explicit Bayesian security constraints.

\subsection{Our Contribution}
Our contribution includes:

\smallskip\noindent\textbf{System model and evaluation criteria.}
We introduce a model of searcher competition in which searchers differ in both opportunity coverage and execution quality. 
For each opportunity class, a searcher may or may not discover and submit for the opportunity, and conditional on participation its submitted code has a certified normalized net value. 
This lets us separate two important sources of searcher advantage: seeing more opportunities and executing a given opportunity more efficiently. 
We then define economic and security criteria for evaluating searcher auctions. 
The economic criteria are a Shapley-weighted Jain fairness index, which measures whether rewards are aligned with marginal contribution, and an expected-reward HHI, which measures reward concentration. 
The security criteria are Bayesian copied-code Sybil resistance and Bayesian validator--searcher coalition resistance.

\smallskip\noindent\textbf{Analysis of the first-price auction.}
We analyze the first-price auction as the benchmark for current winner-take-all searcher competition. 
Under monotone searcher-specific bidding, copied or degraded versions of a searcher's own code cannot improve either the searcher's payoff or the joint payoff of the searcher and validator. 
Thus, the first-price auction has strong copied-code security properties. 
Economically, however, its rewards are governed by rank dominance: searchers are rewarded for being the highest-bidding active participant, not for their marginal contribution. 
We show that this distinction becomes important under heterogeneity. 
When execution quality is unequal, first-price rewards can concentrate on a small dominant class; in the limiting case of a single dominant searcher, the effective number of rewarded searchers approaches one.

\smallskip\noindent\textbf{Shapley-Capped Auction.}
We propose the Shapley-Capped Auction as an alternative design. 
The mechanism still executes the highest certified-value submission, but when the number of admitted submissions is below a cap, it distributes retained rewards according to the Shapley value of the validator-dependent max game. 
When the number of admitted submissions exceeds the cap, it switches to a conservative fallback branch and burns the residual. 
The cap and burn parameters are used to control the incentives created by reward sharing. 
We prove budget feasibility and provide a Bayesian security characterization showing that it is sufficient to consider a bounded range of Sybil sizes while optimizing over all feasible copied or degraded submissions. This yields a feasible region for the security parameters under which the mechanism satisfies copied-code Sybil resistance and validator--searcher coalition resistance.

\smallskip\noindent\textbf{Real-world data and empirical calibration.}
We analyze Ethereum arbitrage data to connect the model to observed searcher behavior. 
The goal is to estimate practical patterns of searcher participation, profit concentration, and opportunity coverage, and then use these patterns to evaluate the mechanism under realistic parameters. 
This empirical component is not meant to fully identify every real-world searcher entity, since a single entity may control multiple contracts, but it provides a data-driven basis for calibrating the participation and execution-quality primitives used in the theoretical and numerical analysis.

\subsection{Related Work}
\smallskip\noindent\textbf{Searcher's MEV activities.}
To produce competitive bundles, searchers rely on a combination of advantageous resources: efficient algorithms and heuristics to detect MEV opportunities, substantial capital to execute transactions across domains, and information asymmetries to shield lucrative opportunities from competitors. 
The strategic landscape has grown considerably more complex over recent years. 
Early MEV extraction was focused on  \textit{atomic} strategies, where all transactions constituting an attack execute within a single chain or block, guaranteeing all-or-nothing execution. 
Prominent examples include cyclic arbitrage across decentralized exchanges (DEX) and sandwich attacks on pending user trades~\cite{DBLP:conf/sp/DaianGKLZBBJ20,DBLP:conf/sp/QinZG22}.
Later, the proliferation of Layer-2 systems and the frictions between heterogeneous market domains have increasingly pushed searchers toward \textit{non-atomic} strategies, giving rise to arbitrage opportunities spanning multiple layers, chains, and off-chain venues such as centralized exchanges (CEX)~\cite{gogol2024cross,DBLP:journals/pomacs/OzTSMGRM25,DBLP:conf/ccs/TorresMWNS24}. 

\smallskip\noindent\textbf{Centralization in the MEV market.}
Block building on Ethereum has become highly concentrated: the leading builder, Titan, alone accounts for around 50\% of blocks in May 2026. 
A prominent study explaining builder centralization is that of {\"{O}}z et al.~\cite{DBLP:conf/aft/OzSTM24}, which links builders' market share to order-flow diversity and their profitability to access to exclusive order-flow providers.
Recent measurement studies on block builders also suggest that builder dominance is shaped not only by builder-side strategies, but more importantly by access to valuable searcher-generated bundles~\cite{DBLP:conf/sp/HeimbachPS24,DBLP:conf/aft/WuSTP25,DBLP:conf/sp/YangN025}. 
These bundles may derive their value from private information~\cite{pahari2025exclusive}, more effective search and bidding strategies~\cite{DBLP:conf/ndss/LuoLLHLMSC26}, or vertical integration between searchers and builders~\cite{DBLP:conf/fc/PaiR24}.

Compared with the extensive literature on block builders, systematic studies of searchers remain relatively recent.
Searchers are heterogeneous in both strategy and market access. 
At the strategy level, searchers specialize in different forms of MEV extraction, including arbitrage, sandwich, liquidation, and mixed strategies~\cite{DBLP:conf/ndss/LuoLLHLMSC26}. 
At the access level, searchers may rely on public mempool transactions, private or exclusive order flow, MEV-Share signals\footnote{https://github.com/flashbots/mev-share}, and off-chain information. 

Moreover, searcher competition is unevenly distributed across opportunities and actors. 
Mamageishvili et al.~\cite{DBLP:conf/aft/MamageishviliSS24} showed that validator revenue increases with searcher competition: when many searchers identify similar opportunities, competition pushes most of the value to the validator, while searchers retain more value when competition is weak. 
Empirical studies further confirmed that the searcher market is highly skewed toward large or specialized actors. 
For non-atomic arbitrage, Heimbach et al.~\cite{DBLP:conf/sp/HeimbachPS24} and Wu et al.~\cite{DBLP:conf/aft/WuSTP25} found that a small number of searchers account for most of the identified arbitrage volume and value. 

These results suggest that searchers do not compete in a single homogeneous market---some MEV opportunities are highly contested, whereas others are repeatedly captured by searchers with persistent advantages. 
From a game-theoretic perspective, Mamageishvili et al.~\cite{DBLP:conf/aft/MamageishviliSS24} provide an important foundation by modeling searcher competition as a cooperative \textit{submodular} game, where the value function maps each coalition of searcher bundles to the highest block value achievable from that coalition. 
What remains less understood is how to model searcher competition in a way that captures heterogeneous opportunity access, variation in execution quality, and measurable competition intensity across the searcher market.

\smallskip\noindent\textbf{Game-Theoretic Approaches to MEV Auction Design.}
Recent game-theoretic analyses of MEV auctions increasingly emphasize the tension between allocative efficiency, decentralization, and resilience against strategic manipulation~\cite{roughgarden2021transaction,Garimidi2026BeyondWTA}. While traditional winner-take-all procurement auctions inherently drive market concentration, alternative mechanisms that penalize monopoly allocations must strictly account for Sybil attacks in permissionless environments~\cite{Garimidi2026BeyondWTA}. Pan et al.~\cite{Pan2026SybilProof} demonstrate that the second-price auction with symmetric tie-breaking is the only non-wasteful, incentive-compatible, and Sybil-proof direct mechanism in single-parameter settings, severely limiting the design space for secure alternative distributions. To circumvent these constraints and achieve fair surplus allocation without compromising security, recent work applies cooperative game theory directly to order flow auctions (OFAs)~\cite{DBLP:conf/aft/MamageishviliSS24}. Specifically, Shapley value-based frameworks have been proposed to equitably redistribute matchmaking revenues among transaction creators according to their marginal contributions, utilizing randomized approximation to handle the computational complexity of the MEV supply chain~\cite{rasheed2025shapley}. Complementing these static allocation approaches, Braga et al.~\cite{Braga2024DynamicMEV} propose a dynamic MEV-sharing mechanism that adjusts the protocol-level extraction rate over time to balance participation incentives between users and MEV extractors. Together, this literature highlights a critical shift from purely competitive block building toward robust, Sybil-proof mechanisms and cooperative revenue-sharing models.

\section{Model, Metrics, and Mechanisms}
\label{sec:model}

This section introduces a heterogeneous model of MEV searcher competition. Searchers differ in both opportunity coverage and execution efficiency. We evaluate mechanisms along two economic dimensions---contribution-adjusted fairness and reward decentralization---and two security dimensions---copied-code Sybil resistance and validator--searcher coalition resistance.

\subsection{System Model}
\label{subsec:system-model}

We model searcher competition for MEV opportunities in a setting with heterogeneous opportunity discovery, heterogeneous execution quality, and permissionless submissions.
We focus on on-chain competition and abstract away from exogenous off-chain or cross-domain factors, such as centralized-exchange price movements and MEV opportunities originating from off-chain systems.

\noindent\textbf{Entities and opportunity classes.}
There is one validator (proposer) and a finite set of potential searchers
\(
\mathcal S=\{1,\ldots,n\}.
\)
MEV opportunities are indexed by \(o\in\mathcal O\). Each opportunity belongs to a class
\(
\tau(o)\in\mathcal C\footnote{A class may represent a broad MEV type, such as arbitrage, liquidation, sandwiching, or a finer subclass, such as arbitrage over a particular family of AMMs or routes.}.\)

\noindent\textbf{Opportunity coverage.}
Searcher \(i\in\mathcal S\) can participate in opportunity \(o\) only if it discovers the opportunity and can submit \emph{execution code} before the deadline. We call such searchers \emph{active searchers} for opportunity \(o\). Let
\(
D_{i,o}\sim\mathrm{Bernoulli}(\alpha_{i,\tau(o)})
\)
denote this event. The parameter \(\alpha_{i,\tau(o)}\in[0,1]\) is searcher \(i\)'s coverage probability for class \(\tau(o)\). The realized active set for opportunity \(o\) is
\[
\mathcal S_{\mathrm{act}}(o)=\{i\in\mathcal S:D_{i,o}=1\},
\qquad
N_o=|\mathcal S_{\mathrm{act}}(o)|.
\]
How opportunities are clustered into classes, and how \(\mathcal S_{\mathrm{act}}(o)\) is empirically identified, is outside the formal model.\footnote{One practical approach is to place submissions in the same active set \(\mathcal S_{\mathrm{act}}\) when executing the highest-net-value submission would reduce the residual profitability of the remaining submissions below a specified threshold.}

\noindent\textbf{Execution code, certified values, and normalization.}
Any execution code is identified by two certified quantities under the canonical simulation rule: its gross realized value and its execution cost. Let \(\tilde v_{i,o}\) denote the gross certified value generated by searcher \(i\)'s code, and let \(\tilde c_{i,o}\) denote its certified execution cost. The cost \(\tilde c_{i,o}\) captures execution losses that are objectively computable from the submitted code, such as base-fee gas costs, realized slippage, and flash-loan fees; it does not include transfers to the validator. We define the certified net value of this code as \(\tilde z_{i,o}=\tilde v_{i,o}-\tilde c_{i,o}\).

For each opportunity \(o\), let \(V_o>0\) denote a public opportunity-scale parameter, known before the submission deadline. This scale represents the commonly observable or commonly estimated size of the opportunity and is used to normalize certified values across opportunities. We normalize these quantities by \(V_o\):
\(
v_{i,o}=\frac{\tilde v_{i,o}}{V_o},
\quad
c_{i,o}=\frac{\tilde c_{i,o}}{V_o},
\quad
z_{i,o}=\frac{\tilde z_{i,o}}{V_o}.
\)
We assume \(V_o\) is chosen so that \(v_{i,o}\le 1\) and \(z_{i,o}\le 1\).

\noindent\textbf{Submissions and copied-code deviations.}
For opportunity \(o\), only active searchers with \(D_{i,o}=1\) can submit. A submission \(b\) consists of execution code and, depending on the auction format, may also include a mechanism-specific message \(m_b\), such as a transfer offer. Each submission has a certified net value \(\tilde z_b\) and certified normalized net value \(z_b=\tilde z_b/V_o\), computed under the canonical simulation rule. In a permissionless setting, a searcher may attempt to increase its payoff by representing one execution code through multiple submissions. We model this through \emph{copied-code Sybil deviations} and denote the finite set of submissions made by active searcher \(i\) by \(\mathcal B_i(o)\).

\begin{definition}[Copied-code Sybil deviation]
Fix an opportunity \(o\), an active searcher \(i\in\mathcal S_{\mathrm{act}}(o)\), and a submission \(b\in\mathcal B_i(o)\) with certified normalized net value \(z_b\). A \(k\)-submission copied-code Sybil deviation from \(b\), with \(k\ge2\), is a collection of submissions
\(
\sigma_{i,o}=(\hat b_1,\ldots,\hat b_k)
\)
by the same real searcher \(i\), with certified normalized net values
\(
(z_{\hat b_1},\ldots,z_{\hat b_k}),
\)
such that
\(
z_{\hat b_\ell}\le z_b\)
for every \(\ell=1,\ldots,k.\) 
The reference submission \(b\) is called the original submission.
\end{definition}

\noindent We denote by \(\mathcal B(o)\) the set of all submissions for opportunity \(o\), including both original submissions and Sybil submissions:
\(
\mathcal B(o)=\bigcup_{i\in\mathcal S_{\mathrm{act}}(o)}\mathcal B_i(o),\)
and \(
t_o=|\mathcal B(o)|.
\)
The validator can include at most one submitted execution code for opportunity \(o\).

\noindent\textbf{Bayesian information structure.}
Before the submission deadline, an active searcher \(i\) observes its own original execution code and its certified normalized net value. It does not observe the realized participation decisions, execution codes, certified values, or mechanism-specific messages of other searchers\footnote{This information structure can be implemented, for example, by an encrypted mempool or commit--reveal protocol in which validators and searchers do not observe other submissions before the epoch deadline for decryption or reveal.}. It only knows the corresponding distributional primitives. For each \(j\neq i\),
\(
D_{j,o}\sim \mathrm{Bernoulli}(\alpha_{j,\tau(o)}).
\)
Conditional on \(D_{j,o}=1\), the certified normalized net value of searcher \(j\)'s original code for opportunity \(o\), denoted by \(z^\star_{j,o}\), is distributed according to the cumulative distribution function \(F_{j,\tau(o)}\):
\(
z^\star_{j,o}\sim F_{j,\tau(o)},
\quad
z^\star_{j,o}\in
[\underline{z}_{j,\tau(o)},\overline{z}_{j,\tau(o)}]\subset(0,1).
\)
The distribution \(F_{j,\tau(o)}\) is searcher- and class-specific: it describes searcher \(j\)'s normalized performance across opportunities of class \(\tau(o)\).
After the submission deadline, the set of submissions is fixed. The mechanism then evaluates the submitted execution codes under the canonical simulation rule and observes their realized certified normalized net values. This interim information structure is the basis for the Bayesian-form security properties discussed later in the paper.

\noindent\textbf{Auction mechanism.}
For each opportunity \(o\), a searcher auction mechanism \(\mathcal M\) takes the submitted execution codes, their certified net values, and any mechanism-specific messages as input. It selects one submission for execution and assigns payoffs to the real searchers and the validator.
For each submission \(b\in\mathcal B(o)\), let \(\tilde z_b=\tilde v_b-\tilde c_b\) denote its certified net value. Let \(m_b\) denote its mechanism-specific message. Write \(\tilde z_o=(\tilde z_b)_{b\in\mathcal B(o)}\) and \(m_o=(m_b)_{b\in\mathcal B(o)}\). The mechanism specifies a winner-selection rule \(\operatorname{Win}_o^{\mathcal M}(\tilde z_o,m_o)\in\mathcal B(o)\), which selects a single winning submission. The validator includes the execution code associated with the selected submission.
The mechanism also specifies payoff rules \(U_i^{\mathcal M}(\tilde z_o,m_o)\) for each \(i\in\mathcal S_{\mathrm{act}}(o)\), and \(U_{\mathrm{val}}^{\mathcal M}(\tilde z_o,m_o)\) for the validator. The payoff \(U_i^{\mathcal M}\) is the total payoff assigned to real searcher \(i\), aggregating over all submissions in \(\mathcal B_i(o)\).

\subsection{Evaluation Criteria}
\label{subsec:metrics}

We evaluate a searcher auction mechanism \(\mathcal M\) along two dimensions: economic performance and security. The economic metrics measure whether rewards are fairly and broadly distributed among real searchers. The security properties require that these rewards cannot be profitably manipulated through copied-code Sybil submissions or validator--searcher coalitions. Let \(U_i^{\mathcal M}\) denote searcher \(i\)'s realized payoff and \(U_{\mathrm{val}}^{\mathcal M}\) the validator's realized payoff. Define searcher \(i\)'s expected reward under \(\mathcal M\) as
\(
R_i^{\mathcal M}=\mathbb E[U_i^{\mathcal M}].
\)

\subsubsection{Economic metrics} \label{subsec:economic-metrics}

The first economic metric measures whether expected rewards are aligned with searchers' contributions. We use the Shapley value as the contribution benchmark. The Shapley value is a standard solution concept from cooperative game theory: it assigns to each participant its average marginal contribution over all possible orders in which participants could join a coalition. In our setting, this provides a natural way to measure how much each searcher contributes to the expected surplus generated by competition for an opportunity.

Let \(\mathcal N=\mathcal S\cup\{\mathrm{val}\}\). For \(A\subseteq\mathcal N\), define the expected value of coalition \(A\) in the validator--searcher game as
\[
W^{\mathrm{VS}}(A)=
\begin{cases}
\mathbb E\!\left[V_o\max_{i\in A\cap\mathcal S:D_{i,o}=1} z^\star_{i,o}\right],
& \mathrm{val}\in A,\\
0,& \mathrm{val}\notin A,
\end{cases}
\]
with the convention that the maximum over an empty set is zero. Searcher \(i\)'s validator--searcher Shapley contribution~\cite{Shapley1953Value} is
\[
\phi_i^{\mathrm{VS}}
=
\sum_{A\subseteq\mathcal N\setminus\{i\}}
\frac{|A|!(n-|A|)!}{(n+1)!}
\bigl(W^{\mathrm{VS}}(A\cup\{i\})-W^{\mathrm{VS}}(A)\bigr).
\]
Assume \(\phi_i^{\mathrm{VS}}>0\) for all searchers included in the metric. Define the reward-to-contribution ratio
\(
r_i^{\mathcal M}=R_i^{\mathcal M}/\phi_i^{\mathrm{VS}}.
\)
We then measure searcher-side fairness by applying Jain's index to these ratios: a mechanism is fair when searchers receive similar rewards per unit of validator-dependent Shapley contribution.

\begin{definition}[Shapley-weighted Jain fairness]
The Shapley-weighted Jain fairness index of mechanism \(\mathcal M\) is
\[
\mathcal J_{\mathrm{Sh}}^{\mathcal M}
=
\frac{\left(\sum_{i=1}^n r_i^{\mathcal M}\right)^2}
{n\sum_{i=1}^n (r_i^{\mathcal M})^2}
=
\frac{\left(\sum_{i=1}^n R_i^{\mathcal M}/\phi_i^{\mathrm{VS}}\right)^2}
{n\sum_{i=1}^n (R_i^{\mathcal M}/\phi_i^{\mathrm{VS}})^2}.
\]
\end{definition}

The index satisfies \(1/n\le\mathcal J_{\mathrm{Sh}}^{\mathcal M}\le1\), provided that at least one searcher receives positive expected reward. It equals \(1\) in a fully contribution-fair allocation, where all searchers receive the same reward per unit of Shapley contribution; equivalently, \(R_i^{\mathcal M}=\lambda\phi_i^{\mathrm{VS}}\) for all \(i\) and some \(\lambda>0\). At the other extreme, it equals \(1/n\) when only one searcher has a positive reward-to-contribution ratio, meaning that rewards are maximally concentrated relative to contribution. Note Jain's original index measures equality among the quantities to which it is applied. Since we apply it to \(R_i^{\mathcal M}/\phi_i^{\mathrm{VS}}\), this metric measures equality of contribution-adjusted rewards.

The second economic metric measures how concentrated long-run expected rewards are among searchers. Whenever \(\sum_{j=1}^n R_j^M>0\), let
\(
s_i^{\mathcal M}=\frac{R_i^{\mathcal M}}{\sum_{j=1}^n R_j^{\mathcal M}}
\)
be searcher \(i\)'s expected reward share.

\begin{definition}[Effective reward decentralization]
The expected-reward Herfindahl--Hirschman Index (HHI) and effective number of rewarded searchers are
\[
\mathrm{ER\text{-}HHI}^{\mathcal M}=\sum_{i=1}^n (s_i^{\mathcal M})^2,
\qquad
N_{\mathrm{eff}}^{\mathcal M}=\frac{1}{\mathrm{ER\text{-}HHI}^{\mathcal M}}.
\]
\end{definition}

The Herfindahl--Hirschman Index (HHI) is a standard concentration measure. Here we apply it to expected reward shares rather than market shares, so \(\mathrm{ER\text{-}HHI}^{\mathcal M}\) measures concentration in searcher rewards. It ranges from \(1/n\), when all searchers receive equal expected reward shares, to \(1\), when one searcher receives all expected rewards. The reciprocal \(N_{\mathrm{eff}}^{\mathcal M}\) gives the effective number of rewarded searchers: it equals \(n\) under equal reward shares and equals \(1\) when rewards are fully concentrated on a single searcher. Unlike \(\mathcal J_{\mathrm{Sh}}^{\mathcal M}\), this metric does not adjust for contribution; it measures reward decentralization directly.

\subsubsection{Security properties}

The first security property concerns identity splitting by a searcher that can only copy or degrade its own execution code. For the security definitions, write \(\mathcal B_{-i}(o)=\bigcup_{j\neq i}\mathcal B_j(o)\). We use \(U_i^{\mathcal M}(\{b\}\cup\mathcal B_{-i}(o))\) to denote the payoff of real searcher \(i\) when it submits only the original submission \(b\), and \(U_i^{\mathcal M}(\sigma_i\cup\mathcal B_{-i}(o))\) to denote its payoff when it replaces \(b\) by a copied-code Sybil deviation \(\sigma_i\). The payoff \(U_i^{\mathcal M}\) aggregates over all submissions controlled by real searcher \(i\).

A strong benchmark is ex-post copied-code Sybil resistance~\cite{Pan2026SybilProof}. This property requires that copied-code identity splitting is never profitable for any realized profile of rival submissions.

\begin{definition}[Ex-post copied-code Sybil resistance]
A mechanism \(\mathcal M\) is ex-post copied-code Sybil-resistant if, for every opportunity \(o\), every real searcher \(i\), every original submission \(b\in\mathcal B_i(o)\), every realized rival submission set \(\mathcal B_{-i}(o)\), and every copied-code Sybil deviation \(\sigma_i\) from \(b\),
\(
U_i^{\mathcal M}(\{b\}\cup\mathcal B_{-i}(o))
\ge
U_i^{\mathcal M}(\sigma_i\cup\mathcal B_{-i}(o)).
\)
\end{definition}

This security requirement states that, once the rival submissions \(\mathcal B_{-i}(o)\) are fixed, a real searcher cannot increase its total payoff by replacing one original submission \(b\) with any feasible collection of copied-code submissions \(\sigma_i\).

We next introduce a weaker security requirement that evaluates copied-code deviations in expectation rather than state by state~\cite{Pan2026SybilProof}. Under this requirement, a copied-code Sybil deviation is ruled out if it does not increase the deviating searcher's expected payoff conditional on its own participation and certified normalized net value.

\begin{definition}[Bayesian copied-code Sybil resistance]
A mechanism \(\mathcal M\) is Bayesian copied-code Sybil-resistant if, for every real searcher \(i\), every \(z\in[\underline z_{i,\tau(o)},\overline z_{i,\tau(o)}]\), every original submission \(b\) with \(z_b=z\), and every copied-code Sybil deviation \(\sigma_i\) from \(b\),
\[
\mathbb E\!\left[
U_i^{\mathcal M}(\{b\}\cup\mathcal B_{-i}(o))
\mid D_{i,o}=1,\ z^\star_{i,o}=z
\right]
\ge
\mathbb E\!\left[
U_i^{\mathcal M}(\sigma_i\cup\mathcal B_{-i}(o))
\mid D_{i,o}=1,\ z^\star_{i,o}=z
\right].
\]
\end{definition}

This weaker Bayesian definition is appropriate for our introduced deadline-based system model (see Sect.~\ref{subsec:system-model}). Before the submission deadline, a deviating searcher observes its own execution code and certified normalized net value, but cannot condition its copied-code Sybil strategy on the realized submissions of its competitors.

The second security property extends copied-code Sybil resistance to validator--searcher coalitions. It requires that replacing an original submission with copied-code submissions cannot increase the combined payoff of the deviating searcher and the validator. This property is important because, if a copied-code deviation can increase joint surplus, the searcher and validator may have a direct incentive to coordinate the deviation and share the gain. As above, we define both an ex-post version and a Bayesian version of this requirement.

\begin{definition}[Ex-post validator--searcher coalition resistance]
A mechanism \(\mathcal M\) is ex-post validator--searcher coalition-resistant under copied-code deviations if, for every opportunity \(o\), every real searcher \(i\), every original submission \(b\in\mathcal B_i(o)\), every realized rival submission set \(\mathcal B_{-i}(o)\), and every feasible copied-code Sybil deviation \(\sigma_i\) from \(b\),
\[
\begin{aligned}
U_i^{\mathcal M}(\{b\}\cup\mathcal B_{-i}(o))
+
U_{\mathrm{val}}^{\mathcal M}(\{b\}\cup\mathcal B_{-i}(o))
\ge
U_i^{\mathcal M}(\sigma_i\cup\mathcal B_{-i}(o))
+
U_{\mathrm{val}}^{\mathcal M}(\sigma_i\cup\mathcal B_{-i}(o)).
\end{aligned}
\]
\end{definition}

This security requirement rules out copied-code deviations that increase the joint payoff of the deviating searcher and the validator, evaluated state by state for each realized rival submission set.

\begin{definition}[Bayesian validator--searcher coalition resistance]
A mechanism \(\mathcal M\) is Bayesian validator--searcher coalition-resistant under copied-code deviations if, for every real searcher \(i\), every \(z\in[\underline z_{i,\tau(o)},\overline z_{i,\tau(o)}]\), every original submission \(b\) with \(z_b=z\), and every feasible copied-code Sybil deviation \(\sigma_i\) from \(b\),
\[
\begin{aligned}
&\mathbb E\!\left[
U_i^{\mathcal M}(\{b\}\cup\mathcal B_{-i}(o))
+
U_{\mathrm{val}}^{\mathcal M}(\{b\}\cup\mathcal B_{-i}(o))
\mid D_{i,o}=1,\ z^\star_{i,o}=z
\right] \\
&\quad\ge
\mathbb E\!\left[
U_i^{\mathcal M}(\sigma_i\cup\mathcal B_{-i}(o))
+
U_{\mathrm{val}}^{\mathcal M}(\sigma_i\cup\mathcal B_{-i}(o))
\mid D_{i,o}=1,\ z^\star_{i,o}=z
\right].
\end{aligned}
\]
\end{definition}

The Bayesian version applies the same joint-payoff comparison from the interim perspective of the deviating searcher. It rules out copied-code deviations that increase the expected combined payoff of the searcher--validator coalition under the deadline-based information structure.

\section{First-Price Auction}
\label{sec:fpa}

We first analyze the sealed first-price auction (FPA), which is the natural benchmark for current private MEV bundle markets. In this mechanism, each active searcher submits a transfer offer to the validator; the highest offer wins, and the winner pays its own offer.

\subsection{FPA Mechanism}
For the FPA, the mechanism-specific message is the unnormalized transfer offer to the validator, \(m_b=\widetilde{\mathsf{bid}}_b\). We write
\(
\widetilde{\mathsf{bid}}_o
=
(\widetilde{\mathsf{bid}}_b)_{b\in\mathcal B(o)}
\)
for the vector of submitted transfer offers.
The first-price auction selects the submission with the highest transfer offer, breaking ties by a fixed priority order, such as encrypted-mempool arrival order, fixed before reveal\footnote{For copied-code deviations, copied submissions cannot obtain higher tie-breaking priority than the original submission they copy.}:
\[
\operatorname{Win}_o^{\mathrm{FPA}}
(\tilde z_o,\widetilde{\mathsf{bid}}_o)
\in
\arg\max_{b\in\mathcal B(o)}
\widetilde{\mathsf{bid}}_b.
\]
When the argmax contains multiple submissions, \(\operatorname{Win}_o^{\mathrm{FPA}}\) is the highest-priority submission among them.
If submission
\(
b^\star
=
\operatorname{Win}_o^{\mathrm{FPA}}
(\tilde z_o,\widetilde{\mathsf{bid}}_o)
\)
wins, the validator includes the execution code associated with \(b^\star\). The owner \(s(b^\star)\) receives payoff
\(
U_{s(b^\star)}^{\mathrm{FPA}}
(\tilde z_o,\widetilde{\mathsf{bid}}_o)
=
\tilde z_{b^\star}
-
\widetilde{\mathsf{bid}}_{b^\star},
\)
and the validator receives
\(
U_{\mathrm{val}}^{\mathrm{FPA}}
(\tilde z_o,\widetilde{\mathsf{bid}}_o)
=
\widetilde{\mathsf{bid}}_{b^\star}.
\)
All losing searchers receive zero.

For the FPA, each submission \(b\in\mathcal B_i(o)\) made by searcher \(i\) has normalized certified net value \(z_b\) and submits the normalized transfer offer
\(
\mathsf{bid}_b
=
\beta_{i,\tau(o)}(z_b),
\)
where \(\beta_{i,\tau(o)}\) is searcher \(i\)'s bidding function for opportunities of class \(\tau(o)\). The corresponding unnormalized transfer offer is
\(
\widetilde{\mathsf{bid}}_b
=
V_o\mathsf{bid}_b
=
V_o\beta_{i,\tau(o)}(z_b).
\)

The formal security analysis of the FPA is deferred to Appendix~\ref{appendix:Sec_FPA}. There we prove both copied-code Sybil resistance and validator--searcher coalition resistance in the strong ex-post sense; the corresponding Bayesian guarantees follow immediately.

\subsection{Expected Rewards and Economic Properties under FPA}
\label{subsec:fpa-economic}
This section characterizes the economic allocation induced by the FPA. The key feature is that rewards are assigned only to the submission with the highest transfer offer. Hence, under monotone bidding, the FPA rewards the event of being the highest-bidding active searcher rather than marginal contribution to the opportunity.

Let \(F_i=F_{i,\tau(o)}\) and \(\alpha_i=\alpha_{i,\tau(o)}\). Suppose first that the value distributions are continuous and the bidding functions \(\beta_{i,\tau(o)}\) are strictly increasing, so that, for any transfer offer \(\mathsf{bid}\) in the range of \(\beta_{i,\tau(o)}\), the inverse bidding function \(\beta_{i,\tau(o)}^{-1}(\mathsf{bid})\) is well defined. Conditional on searcher \(i\) being active and having normalized certified net value \(z\), its interim win probability is
\[
\Pi_i^{\mathrm{win}}(z)
=
\prod_{j\neq i}
\left(
1-\alpha_j+\alpha_j
F_j\!\left(
\beta_{j,\tau(o)}^{-1}(\beta_{i,\tau(o)}(z))
\right)
\right).
\]
Therefore searcher \(i\)'s expected reward under the FPA is
\[
R_i^{\mathrm{FPA}}
= \mathbb E[U_i^{\mathrm{FPA}}] = 
\alpha_i V_o
\int
\left(z-\beta_{i,\tau(o)}(z)\right)
\Pi_i^{\mathrm{win}}(z)\,dF_i(z),
\]
and the validator's expected payoff is
\[
R_{\mathrm{val}}^{\mathrm{FPA}}
= \mathbb E[U_\mathrm{val}^{\mathrm{FPA}}] =
V_o
\sum_{i=1}^n
\alpha_i
\int
\beta_{i,\tau(o)}(z)
\Pi_i^{\mathrm{win}}(z)\,dF_i(z).
\]
The economic metrics \(\mathcal J_{\mathrm{Sh}}^{\mathrm{FPA}}\), \(\mathrm{ER\text{-}HHI}^{\mathrm{FPA}}\), and \(N_{\mathrm{eff}}^{\mathrm{FPA}}\) are obtained by substituting \(R_i^{\mathrm{FPA}}\) into the definitions in Sect.~\ref{subsec:metrics}. To obtain sharper comparisons, we use the proportional-bidding benchmark
\(
\beta_{i,\tau(o)}(z)=\rho z
\)
for all searchers and opportunity classes, where \(\rho\in(0,1)\). Then the FPA ranks submissions by certified normalized net value, while the winner keeps fraction \(1-\rho\) of its certified net value. In this benchmark,
\[
R_i^{\mathrm{FPA}}
=
V_o(1-\rho)\Theta_i,
\qquad
\Theta_i
=
\alpha_i
\int
z
\prod_{j\neq i}
\left(1-\alpha_j+\alpha_jF_j(z)\right)
\,dF_i(z).
\]
Thus all searcher reward shares, and hence \(\mathrm{ER\text{-}HHI}^{\mathrm{FPA}}\) and \(N_{\mathrm{eff}}^{\mathrm{FPA}}\), are determined by the rank-dominance scores \((\Theta_i)_{i=1}^n\).

\begin{proposition}[Symmetric setting]
Suppose \(\alpha_i=\alpha\) and \(F_i=F\) for all \(i\). Under proportional bidding, \(R_i^{\mathrm{FPA}}=R_j^{\mathrm{FPA}}\) for all \(i,j\). Consequently,
\[
\mathrm{ER\text{-}HHI}^{\mathrm{FPA}}=\frac1n,
\qquad
N_{\mathrm{eff}}^{\mathrm{FPA}}=n.
\]
Moreover, since the Shapley contributions are also symmetric, \(\mathcal J_{\mathrm{Sh}}^{\mathrm{FPA}}=1\).
\end{proposition}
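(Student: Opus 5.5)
The plan is to work directly from the closed form for proportional bidding stated just before the proposition, namely \(R_i^{\mathrm{FPA}}=V_o(1-\rho)\Theta_i\) with
\(\Theta_i=\alpha_i\int z\prod_{j\neq i}(1-\alpha_j+\alpha_jF_j(z))\,dF_i(z)\).
Under the symmetry hypothesis \(\alpha_i=\alpha\) and \(F_i=F\), the integrand becomes \(z\,(1-\alpha+\alpha F(z))^{n-1}\) for every \(i\). The measure \(dF\) is also common to all \(i\), so
\[
\Theta_i=\alpha\int z\,(1-\alpha+\alpha F(z))^{n-1}\,dF(z)=:\Theta .
\]
This quantity is independent of \(i\), which gives \(R_i^{\mathrm{FPA}}=R_j^{\mathrm{FPA}}\) for all \(i,j\).

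For the HHI claims, I will first check that the common reward is strictly positive, so that the shares \(s_i\) are well defined. This holds when \(\alpha>0\): the support of \(F\) lies in \((0,1)\), hence \(z>0\) throughout, and \((1-\alpha+\alpha F(z))^{n-1}>0\) on a set of positive \(F\)-measure. Indeed, for \(\alpha<1\) the factor is at least \((1-\alpha)^{n-1}\), and for \(\alpha=1\) it equals \(F(z)^{n-1}\), which is positive on the upper part of the support. Then \(s_i=1/n\) for every \(i\), so \(\mathrm{ER\text{-}HHI}^{\mathrm{FPA}}=n\cdot n^{-2}=1/n\) and \(N_{\mathrm{eff}}^{\mathrm{FPA}}=n\).

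For the Jain index, I will show that the Shapley contributions \(\phi_i^{\mathrm{VS}}\) are all equal. Under symmetry, the characteristic function \(W^{\mathrm{VS}}(A)\) depends on \(A\) only through whether \(\mathrm{val}\in A\) and through \(|A\cap\mathcal S|\), because the \((D_{i,o},z^\star_{i,o})\) are i.i.d. across searchers. For this I will assume independence across searchers, which is implicit in the product form of the win probability. It follows that any permutation of \(\mathcal S\) (fixing \(\mathrm{val}\)) leaves \(W^{\mathrm{VS}}\) invariant, and the symmetry axiom of the Shapley value gives \(\phi_i^{\mathrm{VS}}=\phi\) for all \(i\). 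The explicit substitution argument in the Shapley formula is a short alternative.

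I also need \(\phi>0\). This follows because \(\phi_i^{\mathrm{VS}}\) includes the term with \(A=\{\mathrm{val}\}\), whose marginal contribution is \(\alpha V_o\mathbb E[z^\star]>0\), and all marginal contributions are nonnegative since the max game is monotone. Hence \(r_i=R/\phi=:r>0\) for all \(i\), and
\[
\mathcal J_{\mathrm{Sh}}^{\mathrm{FPA}}=\frac{(nr)^2}{n\cdot nr^2}=1 .
\]

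The only delicate points are the positivity and well-definedness checks (\(\alpha>0\), and independence across searchers for the symmetry of \(W^{\mathrm{VS}}\)). I will state these explicitly as implicit hypotheses, consistent with the paper's assumption that \(\phi_i^{\mathrm{VS}}>0\). No other step is expected to pose difficulty.
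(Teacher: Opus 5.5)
Your proposal is correct and follows essentially the same route as the paper. Symmetry makes \(\Theta_i\) independent of \(i\), so rewards are equal and hence \(\mathrm{ER\text{-}HHI}^{\mathrm{FPA}}=1/n\) and \(N_{\mathrm{eff}}^{\mathrm{FPA}}=n\); symmetry of \(W^{\mathrm{VS}}\) across searchers gives equal Shapley contributions, so the reward-to-contribution ratio is constant and the Jain index equals \(1\). Your extra checks (\(\alpha>0\), positivity of the common reward and of \(\phi\), and independence across searchers) make explicit what the paper's proof leaves implicit.
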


\begin{proof}
Symmetry gives \(\Theta_i=\Theta_j\) for all \(i,j\), hence equal expected rewards and equal reward shares. The HHI and effective-number claims follow immediately. The game \(W^{\mathrm{VS}}(\cdot)\) is symmetric across searchers, so \(\phi_i^{\mathrm{VS}}=\phi_j^{\mathrm{VS}}\) for all \(i,j\), and therefore \(R_i^{\mathrm{FPA}}/\phi_i^{\mathrm{VS}}\) is constant.
\end{proof}

The symmetric case provides a natural benchmark in which the FPA is decentralized and contribution-fair. Once searchers differ in coverage or execution quality, rewards become concentrated around the searchers most likely to win the rank competition.

\begin{proposition}[Dominant execution-quality class with heterogeneous coverage]
\label{prop:dominant-quality-class}
Suppose \(\alpha_i=\alpha_H\in(0,1)\) for every \(i\le m\) and \(\alpha_j=\alpha_L\in(0,1)\) for every \(j>m\), with \(\alpha_H\ge\alpha_L\). For a distribution \(F\), let \(\operatorname{supp}(F)\) denote the set of normalized net values attainable with positive probability. Under proportional bidding, suppose there exist \(m\in\{1,\ldots,n\}\) and \(\underline z_H,\overline z_L\in(0,1)\), with \(\underline z_H>\overline z_L\), such that \(F_i=F_H\) and \(\operatorname{supp}(F_H)\subseteq[\underline z_H,1]\) for every \(i\le m\), while \(\operatorname{supp}(F_j)\subseteq[0,\overline z_L]\) for every \(j>m\). Then, 
% writing \(p_H=1-(1-\alpha_H)^m\),
\[
N_{\mathrm{eff}}^{\mathrm{FPA}}
\le
m\left(
\frac{(1-(1-\alpha_H)^m)\underline z_H+(1-\alpha_H)^m(n-m)\alpha_L\overline z_L}
{(1-(1-\alpha_H)^m)\underline z_H}
\right)^2.
\]
Consequently, as \(\alpha_H\to1\), \(N_{\mathrm{eff}}^{\mathrm{FPA}}\to m\) and \(\mathcal J_{\mathrm{Sh}}^{\mathrm{FPA}}\to m/n\).
% if \(\phi_i^{\mathrm{VS}}>0\) for every searcher included in \(\mathcal J_{\mathrm{Sh}}^{\mathrm{FPA}}\), then \(\mathcal J_{\mathrm{Sh}}^{\mathrm{FPA}}\to m/n\).
\end{proposition}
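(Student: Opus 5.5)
Under proportional bidding, $R_i^{\mathrm{FPA}}=V_o(1-\rho)\Theta_i$, so $N_{\mathrm{eff}}^{\mathrm{FPA}}=(\sum_i\Theta_i)^2/\sum_i\Theta_i^2$. The plan is to bound the high-class mass from below and the low-class mass from above, then use Cauchy--Schwarz only on the high class.

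\textbf{Step 1: the high class.} Let $T_H=\sum_{i\le m}\Theta_i$ and $T_L=\sum_{j>m}\Theta_j$. By symmetry within the high class, $\Theta_i=T_H/m$ for every $i\le m$. Hence
\[
\sum_i\Theta_i^2\ \ge\ \sum_{i\le m}\Theta_i^2\ =\ \frac{T_H^2}{m},
\qquad\text{so}\qquad
N_{\mathrm{eff}}^{\mathrm{FPA}}\ \le\ m\Bigl(\frac{T_H+T_L}{T_H}\Bigr)^2 .
\]
It then suffices to show $T_H\ge(1-(1-\alpha_H)^m)\underline z_H$ and $T_L\le(1-\alpha_H)^m(n-m)\alpha_L\overline z_L$. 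The right-hand side is increasing in $T_L$ and decreasing in $T_H$, so these two bounds give the stated inequality.

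\textbf{Step 2: the two mass bounds.}

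For $T_H$, note that $T_H$ is exactly the expected value of $z_{\mathrm{win}}\mathbf 1\{\text{winner is high}\}$. Ties occur with probability zero if the $F$'s are continuous; otherwise use the priority rule, and the interpretation still holds. Whenever at least one high searcher is active, the winner is high, because $\underline z_H>\overline z_L$, and its value is at least $\underline z_H$. That event has probability $1-(1-\alpha_H)^m$, which gives the lower bound.

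For $j>m$, the integrand contains the factor $\prod_{i\le m}(1-\alpha_H+\alpha_HF_H(z))$. For $z\le\overline z_L<\underline z_H$ we have $F_H(z)=0$, so this factor equals $(1-\alpha_H)^m$. Bounding the remaining factors by $1$ and $z\le\overline z_L$ gives $\Theta_j\le\alpha_L\overline z_L(1-\alpha_H)^m$. Summing over the $n-m$ low searchers yields the bound on $T_L$.

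\textbf{Step 3: limits as $\alpha_H\to1$.} The bound tends to $m$. Also $N_{\mathrm{eff}}\ge m$ in the limit, since the low rewards vanish and the $m$ high rewards are equal; hence $N_{\mathrm{eff}}^{\mathrm{FPA}}\to m$.

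For the Jain index, $R_j^{\mathrm{FPA}}\to0$ for $j>m$ while $R_i^{\mathrm{FPA}}$ converges to a common positive limit for $i\le m$. The Shapley values need care here, and this is where I expect the main obstacle. When $\alpha_H=1$, a low searcher's marginal contribution is zero whenever some high searcher is in the coalition. It is positive only in orderings where the validator and the low searcher precede all high searchers. Such orderings have positive probability in the Shapley average, so $\phi_j^{\mathrm{VS}}$ stays bounded away from zero as $\alpha_H\to1$; this keeps $\phi_j^{\mathrm{VS}}>0$ as the metric requires. I would also check that $\phi_i^{\mathrm{VS}}$ for $i\le m$ converges to a common positive limit, since it is continuous in $\alpha_H$.

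Then $r_j\to0$ for $j>m$ and $r_i\to r>0$ for $i\le m$. Substituting gives
\[
\mathcal J_{\mathrm{Sh}}^{\mathrm{FPA}}\ \to\ \frac{(mr)^2}{n\,m r^2}\ =\ \frac{m}{n}.
\]
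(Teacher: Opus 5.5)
Your proposal is correct and follows essentially the same route as the paper. The shared steps are the symmetry bound $\sum_i\Theta_i^2\ge T_H^2/m$, the lower bound on the high-class mass from the event that some high searcher is active, the upper bound $(1-\alpha_H)^m(n-m)\alpha_L\overline z_L$ on the low-class mass, and the same limiting argument for $N_{\mathrm{eff}}^{\mathrm{FPA}}$. The one difference in Step 2 is cosmetic: you read the low-class bound off the integral formula via $F_H(z)=0$ for $z<\underline z_H$, whereas the paper phrases it as the event that no high searcher is active.

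For the Jain limit, the paper makes your ``orderings'' argument explicit by keeping only the predecessor coalition $\{\mathrm{val}\}$. This gives $\phi_j^{\mathrm{VS}}\ge V_o\alpha_L\mathbb E_{F_j}[z]/(n(n+1))$ independently of $\alpha_H$. Like your argument, it implicitly uses that all marginal contributions in the monotone max game are nonnegative.
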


\begin{proof}
Let \(p_H=1-(1-\alpha_H)^m\) be the probability that at least one high-quality searcher is active. Since \(\underline z_H>\overline z_L\), whenever at least one high-quality searcher is active, the FPA winner is in the high-quality class. Hence the total expected reward of the high-quality class is at least \(V_o(1-\rho)p_H\underline z_H\). A low-quality searcher can win only when no high-quality searcher is active, which occurs with probability \((1-\alpha_H)^m\). Since each low-quality searcher is active with probability \(\alpha_L\) and has normalized net value at most \(\overline z_L\), the total expected reward of the low-quality class is at most \(V_o(1-\rho)(1-\alpha_H)^m(n-m)\alpha_L\overline z_L\). Therefore, if \(s_H\) denotes the total reward share of the high-quality class, then
\[
s_H\ge
\frac{p_H\underline z_H}{p_H\underline z_H+(1-\alpha_H)^m(n-m)\alpha_L\overline z_L}.
\]
The \(m\) high-quality searchers are symmetric, since they have the same coverage probability and the same distribution \(F_H\). Thus they receive equal expected rewards, so each has reward share \(s_H/m\). Hence \(\mathrm{ER\text{-}HHI}^{\mathrm{FPA}}=\sum_i(s_i^{\mathrm{FPA}})^2\ge m(s_H/m)^2=s_H^2/m\). Since \(N_{\mathrm{eff}}^{\mathrm{FPA}}=1/\mathrm{ER\text{-}HHI}^{\mathrm{FPA}}\), we obtain
\[
N_{\mathrm{eff}}^{\mathrm{FPA}}
\le
m\left(
\frac{p_H\underline z_H+(1-\alpha_H)^m(n-m)\alpha_L\overline z_L}
{p_H\underline z_H}
\right)^2.
\]

As \(\alpha_H\to1\), we have \(p_H\to1\) and \((1-\alpha_H)^m\to0\), so the upper bound converges to \(m\). Low-quality reward shares converge to zero, while symmetry implies that the \(m\) high-quality searchers have equal limiting reward shares \(1/m\). Hence \(N_{\mathrm{eff}}^{\mathrm{FPA}}\to m\).

% Finally, assume \(\phi_i^{\mathrm{VS}}>0\) for every searcher included in \(\mathcal J_{\mathrm{Sh}}^{\mathrm{FPA}}\). As \(\alpha_H\to1\), the reward-to-contribution ratios of low-quality searchers converge to zero, while the \(m\) high-quality searchers have equal positive limiting ratios by symmetry. Therefore the Jain index is the Jain index of a vector with \(m\) equal positive components and \(n-m\) zero components, which equals \(m/n\). Thus \(\mathcal J_{\mathrm{Sh}}^{\mathrm{FPA}}\to m/n\).
It remains to prove the Jain-index limit. For any low-quality searcher \(j>m\), as \(\alpha_H\to1\),
\[
R_j^{\mathrm{FPA}}
\le
(1-\rho)V_o\alpha_L(1-\alpha_H)^m\overline z_L
\to0.
\]
On the other hand, using only the Shapley marginal contribution from the predecessor coalition \(\{\mathrm{val}\}\),
\[
\phi_j^{\mathrm{VS}}
\ge
\frac{1}{n(n+1)}
\Bigl(W^{\mathrm{VS}}(\{\mathrm{val},j\})-W^{\mathrm{VS}}(\{\mathrm{val}\})\Bigr)
=
\frac{V_o\alpha_L\mathbb E_{F_j}[z]}{n(n+1)}
>0.
\]
Hence \(R_j^{\mathrm{FPA}}/\phi_j^{\mathrm{VS}}\to0\) for every low-quality searcher \(j>m\). For the \(m\) high-quality searchers, symmetry implies equal expected rewards and equal validator--searcher Shapley contributions; their common reward-to-contribution ratio has a positive limit. Therefore the Jain vector converges to one with \(m\) equal positive components and \(n-m\) zero components. Its Jain index is
\(
\frac{(mc)^2}{nmc^2}=\frac{m}{n},
\)
for any \(c>0\). Thus \(\mathcal J_{\mathrm{Sh}}^{\mathrm{FPA}}\to m/n\).
\end{proof}

The proposition shows that FPA reward decentralization is governed by the size of the dominant class, not by the total number of searchers. When \(\alpha_H\) is high, at least one dominant searcher is almost always active, so lower-quality searchers almost never win. Thus, the FPA concentrates rewards on the top class: on \(m\) searchers in general, and on a single effective monopolist when \(m=1\).

\begin{corollary}[Single dominant searcher]
Under the assumptions of Proposition~\ref{prop:dominant-quality-class} with \(m=1\), if searcher \(1\) is the unique dominant searcher, then, as \(\alpha_H\to1\),
\(
N_{\mathrm{eff}}^{\mathrm{FPA}}\to1,
\quad
\mathcal J_{\mathrm{Sh}}^{\mathrm{FPA}}\to\frac1n.
\)
\end{corollary}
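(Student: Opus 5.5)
The plan is to specialize Proposition~\ref{prop:dominant-quality-class} to \(m=1\) and read off both limits. With \(m=1\), the hypotheses say that searcher \(1\) has coverage \(\alpha_H\) and \(\operatorname{supp}(F_H)\subseteq[\underline z_H,1]\). Every other searcher has coverage \(\alpha_L\le\alpha_H\) and support in \([0,\overline z_L]\), with \(\overline z_L<\underline z_H\). The bound of the proposition then reads
\[
N_{\mathrm{eff}}^{\mathrm{FPA}}
\le
\left(
\frac{\alpha_H\underline z_H+(1-\alpha_H)(n-1)\alpha_L\overline z_L}
{\alpha_H\underline z_H}
\right)^2,
\]
since \(1-(1-\alpha_H)^1=\alpha_H\). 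As \(\alpha_H\to1\), the ratio tends to \(1\).

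Combining this with the trivial lower bound \(N_{\mathrm{eff}}^{\mathrm{FPA}}\ge1\) (because \(\mathrm{ER\text{-}HHI}\le1\)) gives \(N_{\mathrm{eff}}^{\mathrm{FPA}}\to1\). This is exactly the statement "\(N_{\mathrm{eff}}^{\mathrm{FPA}}\to m\)" of the proposition with \(m=1\). I would still note the sandwich explicitly so that the corollary does not rely on the symmetry remark, which is vacuous for a single dominant searcher.

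For the Jain index, I would repeat the argument from the proposition's proof. For each \(j>1\), the ratio \(R_j^{\mathrm{FPA}}/\phi_j^{\mathrm{VS}}\) tends to \(0\) for two reasons:
\begin{itemize}
\item \(R_j^{\mathrm{FPA}}\le(1-\rho)V_o\alpha_L(1-\alpha_H)\overline z_L\to0\);
\item \(\phi_j^{\mathrm{VS}}\) stays bounded below by \(V_o\alpha_L\mathbb E_{F_j}[z]/(n(n+1))>0\), which does not depend on \(\alpha_H\).
\end{itemize}

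The one point needing care is that searcher \(1\)'s ratio has a positive limit. Three facts give this:
\begin{itemize}
\item \(R_1^{\mathrm{FPA}}\ge(1-\rho)V_o\alpha_H\underline z_H\to(1-\rho)V_o\underline z_H>0\), because searcher \(1\) wins whenever it is active.
\item \(\phi_1^{\mathrm{VS}}\le W^{\mathrm{VS}}(\mathcal N)\le V_o\), since every marginal contribution is at most \(V_o\) and the Shapley weights sum to one.
\item \(\phi_1^{\mathrm{VS}}\) is bounded away from \(0\): the predecessor coalition \(\{\mathrm{val}\}\) alone contributes at least \(V_o\alpha_H\mathbb E_{F_H}[z]/(n(n+1))\).
\end{itemize}

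Hence \(r_1^{\mathrm{FPA}}\) stays within a compact interval of \((0,\infty)\), while all other \(r_j^{\mathrm{FPA}}\to0\). The Jain index is continuous away from the zero vector, so it tends to \((r_1)^2/(n\,r_1^2)=1/n\). The index may converge even if \(r_1\) itself oscillates, because the limit does not depend on \(r_1\).

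No step is a real obstacle. The only subtlety is the one above: ensuring that the dominant searcher's ratio neither vanishes nor blows up, so that the Jain expression has a well-defined limit. The two-sided bounds on \(R_1^{\mathrm{FPA}}\) and \(\phi_1^{\mathrm{VS}}\) handle this.
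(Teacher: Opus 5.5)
Your proposal is correct and follows the paper's route. The corollary is the \(m=1\) case of Proposition~\ref{prop:dominant-quality-class}, whose proof uses the same upper bound on \(N_{\mathrm{eff}}^{\mathrm{FPA}}\) and the same Jain-limit argument: low-quality ratios vanish via the \(\{\mathrm{val}\}\)-predecessor lower bound on \(\phi_j^{\mathrm{VS}}\), while the dominant searcher's ratio stays positive. Your explicit sandwich with \(N_{\mathrm{eff}}^{\mathrm{FPA}}\ge1\) and your two-sided bounds on \(r_1^{\mathrm{FPA}}\) make rigorous what the paper states via a symmetry remark that is vacuous when \(m=1\).
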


Figure~\ref{fig:fpa-multiple-elite-centralization} illustrates how increasing dominance affects reward concentration under the FPA. The horizontal axis varies the participation probability \(\alpha_H\) of the high-quality class, while different curves correspond to different numbers \(m_H\) of elite searchers. As \(\alpha_H\) increases, FPA rewards become increasingly concentrated within the dominant class: contribution-adjusted fairness declines toward \(m_H/n\), and the effective number of rewarded searchers declines toward \(m_H\). 

The results show that strong asymmetry can substantially reduce the economic decentralization of the FPA. With coverage heterogeneity, FPA rewards concentrate around searchers who access opportunities more often. With execution-quality heterogeneity, concentration is sharper: rewards are captured by the dominant execution class, and when this class contains only a few searchers, the FPA converges toward monopolistic concentration. Thus, the winner-take-all structure of the FPA can amplify centralization in realistic MEV environments where both opportunity access and execution quality are unequal.

\begin{figure}[t]
\centering
\includegraphics[width=\textwidth]{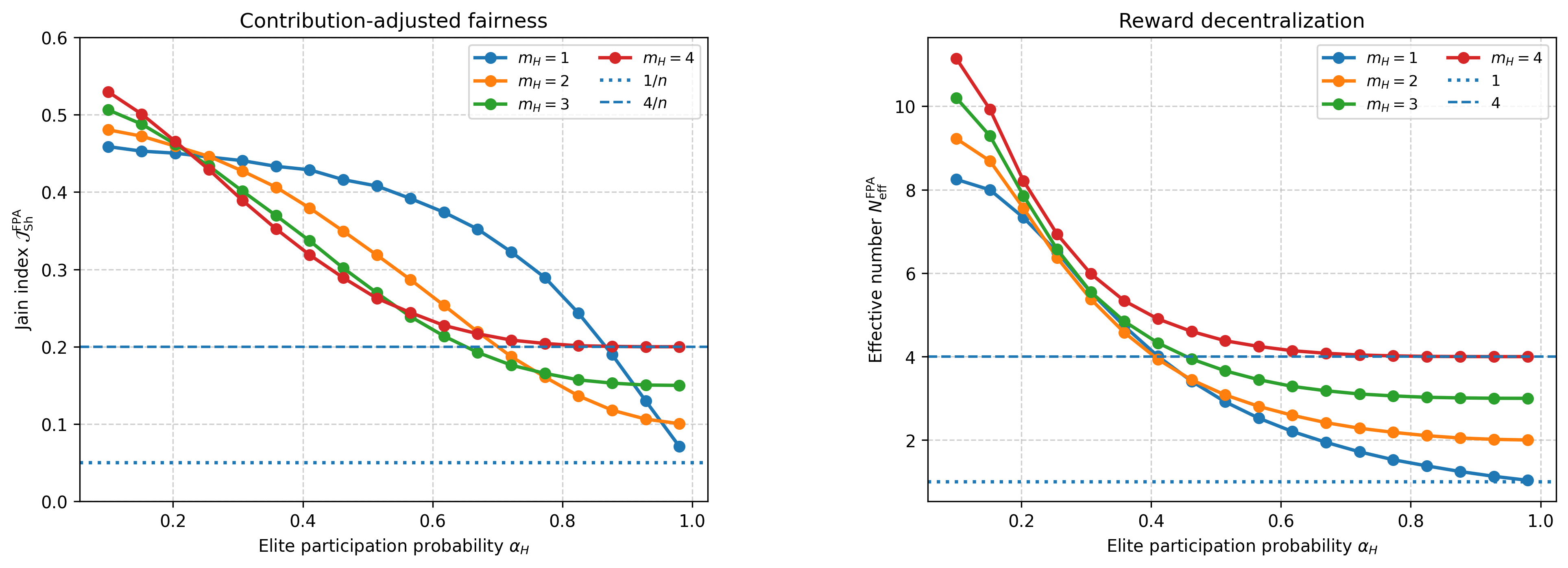}
\caption{Effect of dominant-class participation on FPA fairness and reward decentralization.}
\label{fig:fpa-multiple-elite-centralization}
\end{figure}

\section{The Entry-Filtered Shapley-Capped Auction}
\label{sec:shapley-mechanism}
This section introduces a Shapley-based searcher auction mechanism designed to distribute MEV rewards more fairly and broadly among competing searchers. The goal is to align rewards with searchers' marginal contributions, reduce winner-take-all concentration, and maintain security against copied-code Sybil deviations and validator--searcher coalition deviations in a permissionless MEV environment.

\subsection{Design Rationale}
\label{subsec:mechanism-rationale}

Our primary goal is to design a mechanism that, unlike a winner-take-all auction, distributes the searcher-side MEV reward among searchers that contribute to identifying or executing the same opportunity, rather than assigning all surplus to the top-ranked submission. The immediate difficulty is Sybil manipulation. Once rewards are distributed across multiple submissions, a searcher may submit copied or degraded versions of the same execution strategy to inflate its apparent contribution and capture a larger share. A natural solution is a Tullock-style contest~\cite{Tullock1980Rent}, where proportional allocation with costly bids can be Sybil-proof because multiple bids can be aggregated into a single equivalent bid \cite{Tullock1980Rent,Garimidi2026BeyondWTA}. However, this security relies on an all-pay structure: participants may obtain negative utility and must be willing to lose their bids. In an on-chain searcher auction, this would require deposits or enforceable losses, making the mechanism difficult to implement under current MEV infrastructure.

The broader obstruction is formal. Pan et al. show that any non-wasteful, symmetric, incentive-compatible, and ex-post Sybil-proof direct mechanism must reduce to a second-price auction with symmetric tie-breaking; equivalently, strong Sybil-proofness rules out nontrivial distributional objectives under those axioms \cite{Pan2026SybilProof}. Their Bayesian relaxation shows that this impossibility can be softened: an increasing-ticket-price mechanism can be Bayesian Sybil-proof by allocating the object through a lottery among threshold-clearing bidders whenever at least two such bidders exist, and otherwise falling back to a second-price auction \cite{Pan2026SybilProof}.
Nevertheless, this construction is not well matched to MEV searcher competition. Its Bayesian Sybil deterrence relies on an ex-ante ticket threshold dictated by an upper-tail cutoff: the point beyond which the bid distribution’s tail becomes no heavier than a uniform benchmark. While this condition is easily satisfied by distributions with flat or increasing right tails, it fails under the concentrated, declining upper tails typical of MEV signal models, such as the truncated log-normal distribution~\cite{adadurov2026open, DBLP:conf/icbc2/WuTLV24}. In these cases, the required cutoff is pushed to the maximum feasible valuation. Consequently, the ticket price also reaches this maximum, the lottery branch occurs with probability zero under continuous bids, and the mechanism entirely collapses back into a standard, winner-take-all second-price auction \cite{Pan2026SybilProof}. Appendix~\ref{app:pan-comparison} further discusses Pan et al.'s characterization and the reduction of their Bayesian mechanism to the second-price auction under our truncated-lognormal specification.

We therefore use the Shapley value as the economic allocation target, since it rewards searchers according to their marginal contribution to the opportunity. Shapley sharing alone, however, prevents neither copied-code Sybil manipulation nor validator--searcher collusion in a permissionless submission environment. The mechanism therefore combines Shapley-based rewards with residual caps and burn functions. The searcher-side burn is used to eliminate the expected gain from copied-code Sybil deviations, while the validator-side burn is calibrated to rule out profitable validator--searcher coalition deviations.

\subsection{Entry-Filtered Shapley-Capped Auction Mechanism}
\label{subsec:sca-mechanism}

The entry-filtered Shapley-capped auction mechanism is denoted by \(\mathrm{SCA}_\theta\), where \(\theta\) is the entry-filter threshold. Unlike the FPA, \(\mathrm{SCA}_\theta\) does not use transfer offers; its input is the set of valid submitted execution codes and their certified normalized net values. The mechanism first applies an entry filter: a submitted code is admitted only if its certified normalized net value is at least a protocol threshold \(\theta\in[0,1]\). The threshold is chosen ex ante to control the expected number of admitted submissions. For opportunity class \(\tau\), one may choose \(\theta=\theta_\tau\) so that \(\mathbb E[N_o^\theta]=\sum_i\alpha_{i,\tau}\Pr_{F_{i,\tau}}(z_{i,o}^\star\ge\theta)\le \bar N\), where \(\bar N\) is the target upper bound on the expected admitted population.

Fix an opportunity \(o\). Submissions with \(z_b<\theta\) are rejected and receive zero. Let \(\mathcal B^\theta(o)=\{b\in\mathcal B(o):z_b\ge\theta\}\) be the set of admitted submissions, and let \(t_\theta:=|\mathcal B^\theta(o)|\). Their normalized and unnormalized certified net-value profiles are denoted by \(z_o^\theta:=(z_b)_{b\in\mathcal B^\theta(o)}\) and \(\tilde z_o^\theta:=(\tilde z_b)_{b\in\mathcal B^\theta(o)}\), respectively. 
If \(t_\theta=0\), no submission is executed; set \(z_{(1)}^\theta:=0\), all searcher and validator payoffs to zero, and \(B_o^{\mathrm{SCA}_\theta}:=0\). Thus, \(\theta\) serves as a minimum certified-value requirement for execution.

Otherwise, order admitted submissions by \(z_{(1)}^\theta\ge\cdots\ge z_{(t_\theta)}^\theta\), where \(b_{(r)}^\theta\) is the rank-\(r\) admitted submission. Set \(z_{(t_\theta+1)}^\theta=0\) and \(\Delta_k^\theta=z_{(k)}^\theta-z_{(k+1)}^\theta\). The executed submission is \(\operatorname{Win}_o^{\mathrm{SCA}_\theta}(\tilde z_o^\theta)=b_{(1)}^\theta\), with ties broken by mempool arrival order.\footnote{If two submissions from different identities have the same certified value, the earlier submission in the encrypted mempool is ranked first. The ordering is fixed before decryption or reveal, so copied submissions cannot improve tie-breaking after observing rivals.}

For any coalition \(A \subseteq \mathcal B^\theta(o)\cup\{\mathrm{val}\}\), its realized value under \(\mathrm{SCA}_\theta\) is
\[
W_o^\theta(A)=
\begin{cases}
V_o\max_{b\in A\setminus\{\mathrm{val}\}} z_b,
& \mathrm{val}\in A,\ A\setminus\{\mathrm{val}\}\neq\varnothing,\\
0,& \text{otherwise}.
\end{cases}
\]
Note that the validator is necessary for a coalition to generate value.
For the admitted submissions \(\mathcal B^\theta(o)\), the Shapley values of the validator and the rank-\(r\) admitted submission are 
\[\Phi_{\mathrm{val}}^\theta(z_o^\theta)=V_o\sum_{k=1}^{t_\theta}\frac{k\Delta_k^\theta }{k+1}, \qquad \Phi_{(r)}^\theta(z_o^\theta)=V_o\sum_{k=r}^{t_\theta}\frac{\Delta_k^\theta}{k(k+1)}.\]
Hence \(\Phi_{\mathrm{val}}^\theta(z_o^\theta)+\sum_{r=1}^{t_\theta}\Phi_{(r)}^\theta(z_o^\theta)=V_oz_{(1)}^\theta\).

The mechanism includes two branches: the decentralized Shapley-sharing branch and the capped fallback branch. The mechanism places a cap \(H\ge2\) on the number of admitted submissions that can be handled in the decentralized Shapley-sharing branch. This cap is chosen relative to the admitted population, so \(H\) should be moderately larger than \(\mathbb E[N_o^\theta]\). Given retention parameters \(\kappa_S,\kappa_V\ge0\), define \(\eta_{t_\theta}^V=e^{-\kappa_V(\min\{t_\theta,H\}-1)}\) and \(\eta_{t_\theta}^S=e^{-\kappa_S(\min\{t_\theta,H\}-1)}\).
\begin{itemize}
    \item If \(t_\theta\le H\), all admitted submissions are rewarded according to retained Shapley values:
\[
U_{\mathrm{val}}^{\mathrm{SCA}_\theta}(\tilde z_o^\theta)
=
\eta_{t_\theta}^V\Phi_{\mathrm{val}}^\theta(z_o^\theta),
\qquad
P_{(r)}^{\mathrm{SCA}_\theta}(\tilde z_o^\theta)
=
\eta_{t_\theta}^S\Phi_{(r)}^\theta(z_o^\theta).
\]
\item If \(t_\theta>H\), the number of admitted submissions exceeds the cap, and the mechanism switches to the capped fallback branch:
\[
U_{\mathrm{val}}^{\mathrm{SCA}_\theta}(\tilde z_o^\theta)
=
\frac{H\eta_H^V}{H+1}V_oz_{(1)}^\theta,
\qquad
P_{(1)}^{\mathrm{SCA}_\theta}(\tilde z_o^\theta)
=
\frac{\eta_H^S}{H(t_\theta+1)}V_oz_{(1)}^\theta,
\]
and \(P_{(r)}^{\mathrm{SCA}_\theta}(\tilde z_o^\theta)=0\) for all \(r\ge2\).
\end{itemize}

Thus, when the number of admitted submissions is at most the cap, the mechanism uses Shapley sharing across admitted submissions. When the number of admitted submissions exceeds the cap, it switches to a fallback rule that pays only the top admitted submission on the searcher side while retaining a cap-dependent burn. For each real searcher \(i\), its total payoff is the sum of the payments assigned to its admitted submissions, \(U_i^{\mathrm{SCA}_\theta}(\tilde z_o^\theta)=\sum_{b\in\mathcal B_i(o)\cap\mathcal B^\theta(o)}P_b^{\mathrm{SCA}_\theta}(\tilde z_o^\theta)\). Any surplus remaining after paying the validator and all searchers is burned, \(B_o^{\mathrm{SCA}_\theta}(\tilde z_o^\theta)=V_oz_{(1)}^\theta-U_{\mathrm{val}}^{\mathrm{SCA}_\theta}(\tilde z_o^\theta)-\sum_iU_i^{\mathrm{SCA}_\theta}(\tilde z_o^\theta)\).

\begin{proposition}[Budget feasibility]
\label{prop:sca-theta-budget-feasibility}
For every opportunity and every valid submitted profile,
\[
U_{\mathrm{val}}^{\mathrm{SCA}_\theta}(\tilde z_o^\theta)
+\sum_{i\in\mathcal S_{\mathrm{act}}(o)}
U_i^{\mathrm{SCA}_\theta}(\tilde z_o^\theta)
+B_o^{\mathrm{SCA}_\theta}(\tilde z_o^\theta)
=
V_oz_{(1)}^\theta
\le V_o,
\qquad
B_o^{\mathrm{SCA}_\theta}(\tilde z_o^\theta)\ge0.
\]
\end{proposition}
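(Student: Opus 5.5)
The equality holds by construction, since the burn is defined as the residual \(B_o^{\mathrm{SCA}_\theta}=V_oz_{(1)}^\theta-U_{\mathrm{val}}^{\mathrm{SCA}_\theta}-\sum_iU_i^{\mathrm{SCA}_\theta}\). The bound \(V_oz_{(1)}^\theta\le V_o\) follows from the normalization assumption \(z_b\le1\); when \(t_\theta=0\) both sides are \(0\). The substance of the proof is nonnegativity of the burn. Because every admitted submission belongs to exactly one real searcher, \(\sum_iU_i^{\mathrm{SCA}_\theta}=\sum_{r=1}^{t_\theta}P_{(r)}^{\mathrm{SCA}_\theta}\). So I will show that the validator payment plus all submission payments is at most \(V_oz_{(1)}^\theta\), treating the two branches separately.

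\emph{Shapley-sharing branch} (\(1\le t_\theta\le H\)). All payments are nonnegative, since each \(\Delta_k^\theta\ge0\) by the ordering and the retention factors satisfy \(\eta^V_{t_\theta},\eta^S_{t_\theta}\in(0,1]\) because \(\kappa_V,\kappa_S\ge0\) and \(\min\{t_\theta,H\}\ge1\). Hence
\[
U_{\mathrm{val}}^{\mathrm{SCA}_\theta}+\sum_{r=1}^{t_\theta}P_{(r)}^{\mathrm{SCA}_\theta}
=\eta^V_{t_\theta}\Phi_{\mathrm{val}}^\theta+\eta^S_{t_\theta}\sum_{r=1}^{t_\theta}\Phi_{(r)}^\theta
\le \Phi_{\mathrm{val}}^\theta+\sum_{r=1}^{t_\theta}\Phi_{(r)}^\theta = V_oz_{(1)}^\theta .
\]
The last equality is the efficiency identity stated after the Shapley formulas. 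To be self-contained, I would verify it by interchanging sums: \(\sum_{r}\Phi_{(r)}^\theta=V_o\sum_k\Delta_k^\theta\cdot k/(k(k+1))=V_o\sum_k\Delta_k^\theta/(k+1)\). Adding \(\Phi_{\mathrm{val}}^\theta\) then gives \(V_o\sum_k\Delta_k^\theta=V_oz_{(1)}^\theta\) by telescoping, using \(z^\theta_{(t_\theta+1)}=0\).

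\emph{Fallback branch} (\(t_\theta>H\)). Only the validator and the top submission are paid. Using \(\eta_H^V,\eta_H^S\le1\), \(t_\theta+1\ge H+2\) and \(H\ge2\), I bound
\[
\frac{H\eta_H^V}{H+1}+\frac{\eta_H^S}{H(t_\theta+1)}\le\frac{H}{H+1}+\frac{1}{H(H+2)}\le 1 .
\]
The last inequality is equivalent to \(H+1\le H(H+2)\), which holds for \(H\ge1\). Multiplying by \(V_oz_{(1)}^\theta\ge0\) gives \(B_o^{\mathrm{SCA}_\theta}\ge0\).

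There is no real obstacle here. The only points needing care are the efficiency identity for the max game, which is the telescoping check above, and making sure the fallback searcher payment is small enough, which is the elementary inequality in \(H\).
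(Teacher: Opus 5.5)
Your proposal is correct and follows the same route as the paper's proof. The identity holds by the definition of the burn; in the Shapley branch you use budget balance of the max-game Shapley values together with \(\eta^S_{t_\theta},\eta^V_{t_\theta}\le 1\); in the fallback branch you use the same bound \(\frac{H}{H+1}+\frac{1}{H(H+2)}\le 1\). Your telescoping check of the efficiency identity and your remark that the \(\Phi\)'s must be nonnegative before applying \(\eta\le 1\) are small details that the paper leaves implicit.
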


\begin{proof}
If \(t_\theta=0\), the claim is immediate. If \(1\le t_\theta\le H\), Shapley budget balance gives \(\Phi_{\mathrm{val}}^\theta+\sum_{r=1}^{t_\theta}\Phi_{(r)}^\theta=V_oz_{(1)}^\theta\), and \(\eta_{t_\theta}^S,\eta_{t_\theta}^V\le1\) imply total payments at most \(V_oz_{(1)}^\theta\). If \(t_\theta>H\), total payments are at most
\[
\left(\frac{H}{H+1}+\frac{1}{H(t_\theta+1)}\right)V_oz_{(1)}^\theta
\le
\left(\frac{H}{H+1}+\frac{1}{H(H+2)}\right)V_oz_{(1)}^\theta
<
V_oz_{(1)}^\theta.
\]
Thus the residual defining \(B_o^{\mathrm{SCA}_\theta}\) is nonnegative, and the displayed identity follows by definition of the burn.
\end{proof}

The formal Bayesian security analysis of the entry-filtered SCA is given in Appendix~\ref{app:sca-theta-security}. It shows that it is sufficient to consider copy sizes \(q=2,\ldots,H+1\), while optimizing over all feasible copied-value profiles. The parameters \((\kappa_S,\kappa_V)\) are calibrated so that no such deviation increases the Bayesian interim payoff of either the searcher alone or the searcher--validator coalition. The calibration creates a security--efficiency tradeoff. Increasing \(\kappa_S\) strengthens searcher-side Sybil deterrence but also increases burn. Coalition resistance, by contrast, generally requires \(\kappa_V\) to lie within a feasible interval: identity-expansion deviations may impose a lower bound, while withdrawal deviations may impose an upper bound.
When this security region is nonempty, \((\kappa_S^\star,\kappa_V^\star)\) can be chosen to minimize expected burn subject to both Bayesian security requirements. The calibration reported in Appendix~\ref{app:revenue-decomposition} gives priority to minimizing searcher-side burn.

\subsection{Economic Properties of the Entry-Filtered SCA}
\label{subsec:sca-theta-economic}

We analyze the honest one-submission benchmark under the contribution-adjusted metrics defined in Sect.~\ref{subsec:metrics}. The entry filter affects rewards by excluding submissions below the threshold from the reward-sharing game. Thus, for a searcher \(i\), only states in which \(i\) is active and satisfies \(z_{i,o}^\star\ge\theta\) can generate a positive \(\mathrm{SCA}_\theta\) payment.

Let \(E_{i,o}^\theta:=D_{i,o}\mathbf 1\{z_{i,o}^\star\ge\theta\}\) indicate that searcher \(i\) is active and admitted, and let \(N_o^\theta:=\sum_iE_{i,o}^\theta\) be the admitted population. The cap \(H\) is chosen relative to \(N_o^\theta\), not the raw active count \(N_o\). In particular, the threshold \(\theta\) and the cap \(H\) should be calibrated jointly so that \(\mathbb E[N_o^\theta]\) is controlled and \(H\) is moderately larger than the typical admitted population.

For a realized profile, let \(\Phi_i^\theta(z_o^\theta)\) denote searcher \(i\)'s realized Shapley value in the admitted max game, with \(\Phi_i^\theta(z_o^\theta)=0\) when \(i\) is not admitted. Let \(\phi_i^{\mathrm{VS}}\) denote searcher \(i\)'s ex-ante Shapley contribution in the unfiltered validator--searcher game defined in Section~\ref{subsec:economic-metrics}. Define the admitted contribution coverage ratio
\(
a_{i,\theta}
:=
\frac{
\mathbb E[
\mathbf 1\{E_{i,o}^\theta=1,\ N_o^\theta\le H\}
\Phi_i^\theta(z_o^\theta)]
}{
\phi_i^{\mathrm{VS}}
},
\)
and the admitted contribution exposure ratio
\(
b_{i,\theta}
:=
\frac{
\mathbb E[
\mathbf 1\{E_{i,o}^\theta=1\}
\Phi_i^\theta(z_o^\theta)]
}{
\phi_i^{\mathrm{VS}}
}.
\)
The ratio \(a_{i,\theta}\) is the share of searcher \(i\)'s contribution denominator that is admitted and paid in the decentralized branch. The ratio \(b_{i,\theta}\) is an upper exposure bound for the contribution available to \(i\) after thresholding. These ratios are defined for all searchers and are normalized by \(\phi_i^{\mathrm{VS}}\). Unlike in the unfiltered case, \(b_{i,\theta}\) need not be at most one, because removing ineligible rivals can increase an admitted searcher's Shapley value relative to its full-profile Shapley value.

Define the global lower and upper reward-ratio bounds
\(
\lambda_\theta
:=
e^{-\kappa_S(H-1)}
\min_i a_{i,\theta}\) and
\(
\Lambda_\theta
:=
\max_i b_{i,\theta}.
\)
Thus \(\lambda_\theta\) and \(\Lambda_\theta\) are bounds on the reward-to-contribution ratios \(R_i^{\mathrm{SCA}_\theta}/\phi_i^{\mathrm{VS}}\) across all searchers.

\begin{proposition}[Entry-filtered SCA reward bounds]
\label{prop:sca-theta-reward-bounds}
Under honest one-submission play, assume the entry threshold is chosen so that the admitted population is bounded by \(t_{\max}\), and suppose the fallback searcher retention satisfies \(\eta_H^S\le H/t_{\max}\). Then, for every searcher \(i\),
\(
\lambda_\theta
\le
\frac{R_i^{\mathrm{SCA}_\theta}}{\phi_i^{\mathrm{VS}}}
\le
\Lambda_\theta.
\)
\end{proposition}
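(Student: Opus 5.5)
Under honest one-submission play each searcher $i$ makes a single submission with value $z^\star_{i,o}$. Its payment is therefore nonzero only on the event $E_{i,o}^\theta=1$, and there it is determined by which branch the mechanism uses. The plan is to write $R_i^{\mathrm{SCA}_\theta}=\mathbb E[U_i^{\mathrm{SCA}_\theta}]$ as the sum of two expectations, one over $\{N_o^\theta\le H\}$ (decentralized branch) and one over $\{N_o^\theta>H\}$ (fallback branch), and then bound each term pointwise by a multiple of $\Phi_i^\theta(z_o^\theta)$. Because $t_\theta=N_o^\theta$ under honest play, the branch events coincide with the events appearing in the definitions of $a_{i,\theta}$ and $b_{i,\theta}$.

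\textbf{Lower bound.} On $\{E_{i,o}^\theta=1,\,N_o^\theta\le H\}$ the payment equals $\eta^S_{N_o^\theta}\Phi_i^\theta(z_o^\theta)$. Since $\kappa_S\ge0$ and $\min\{N_o^\theta,H\}-1\le H-1$, we have $\eta^S_{N_o^\theta}\ge e^{-\kappa_S(H-1)}$. The fallback contribution is nonnegative, so we drop it. Taking expectations gives
\[
R_i^{\mathrm{SCA}_\theta}\ge e^{-\kappa_S(H-1)}\,\mathbb E\bigl[\mathbf 1\{E_{i,o}^\theta=1,N_o^\theta\le H\}\Phi_i^\theta(z_o^\theta)\bigr]=e^{-\kappa_S(H-1)}a_{i,\theta}\phi_i^{\mathrm{VS}}.
\]
Taking the minimum over $i$ yields $R_i^{\mathrm{SCA}_\theta}\ge\lambda_\theta\phi_i^{\mathrm{VS}}$.

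\textbf{Upper bound.} In the decentralized branch the payment is at most $\Phi_i^\theta(z_o^\theta)$, since $\eta^S\le1$. The fallback branch is the main obstacle, because it pays the top submission $\frac{\eta_H^S}{H(t_\theta+1)}V_oz^\theta_{(1)}$, which is not defined through $\Phi_{(1)}^\theta$. The plan is to compare it directly with the Shapley value of the top submission. From the closed form, $\Phi_{(1)}^\theta=V_o\sum_{k=1}^{t_\theta}\Delta_k^\theta/(k(k+1))\ge V_o\sum_k\Delta_k^\theta/(t_\theta(t_\theta+1))=V_oz^\theta_{(1)}/(t_\theta(t_\theta+1))$. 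Hence the fallback payment is at most $\eta_H^S\frac{t_\theta}{H}\Phi_{(1)}^\theta$. Using $t_\theta\le t_{\max}$ and the hypothesis $\eta_H^S\le H/t_{\max}$, this is at most $\Phi_{(1)}^\theta$. This step is exactly where that hypothesis is consumed. Lower-ranked submissions receive zero in the fallback branch.

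Combining the two branches, on $\{E_{i,o}^\theta=1\}$ we have $U_i^{\mathrm{SCA}_\theta}\le\Phi_i^\theta(z_o^\theta)$ pointwise, and off that event $U_i^{\mathrm{SCA}_\theta}=0$. Taking expectations gives $R_i^{\mathrm{SCA}_\theta}\le b_{i,\theta}\phi_i^{\mathrm{VS}}\le\Lambda_\theta\phi_i^{\mathrm{VS}}$. Dividing by $\phi_i^{\mathrm{VS}}>0$ finishes the proof.

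A minor point is ties in the ranking, which are resolved by arrival order. The Shapley formula is applied to the ordered profile, so the identification of rank-$r$ values with $\Phi_i^\theta$ is unaffected.
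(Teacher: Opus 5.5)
Your proof is correct and follows the paper's argument: you split into the decentralized and fallback branches, get the lower bound from $\eta^S_{t_\theta}\ge e^{-\kappa_S(H-1)}$ on $\{N_o^\theta\le H\}$, and get the upper bound from the pointwise domination $U_i^{\mathrm{SCA}_\theta}\le\Phi_i^\theta(z_o^\theta)$, using $t_\theta\le t_{\max}$ and $\eta_H^S\le H/t_{\max}$ in the fallback branch. The only difference is that you also justify $V_oz_{(1)}^\theta/(t_\theta(t_\theta+1))\le\Phi_{(1)}^\theta(z_o^\theta)$ from the closed form ($k\le t_\theta$, $\Delta_k^\theta\ge0$, telescoping sum), which the paper asserts without proof.
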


\begin{proof}
On the decentralized branch \(N_o^\theta\le H\), the entry-filtered SCA pays \(\eta_{N_o^\theta}^S\Phi_i^\theta(z_o^\theta)\). Since \(\eta_{N_o^\theta}^S=e^{-\kappa_S(N_o^\theta-1)}\ge e^{-\kappa_S(H-1)}\), we have
\[
R_i^{\mathrm{SCA}_\theta}
\ge
e^{-\kappa_S(H-1)}
\mathbb E[
\mathbf 1\{E_{i,o}^\theta=1,\ N_o^\theta\le H\}
\Phi_i^\theta(z_o^\theta)]
=
e^{-\kappa_S(H-1)}
a_{i,\theta}\phi_i^{\mathrm{VS}}.
\]
Taking the minimum over \(i\) gives the lower bound.

For the upper bound, \(U_i^{\mathrm{SCA}_\theta}\le\Phi_i^\theta(z_o^\theta)\) pointwise. This is immediate in the decentralized branch. In fallback, non-winners receive zero, and the winner receives
\[
\frac{\eta_H^S}{H(t_\theta+1)}V_oz_{(1)}^\theta
\le
\frac{1}{t_\theta(t_\theta+1)}V_oz_{(1)}^\theta
\le
\Phi_{(1)}^\theta(z_o^\theta),
\]
where the first inequality uses \(t_\theta\le t_{\max}\) and \(\eta_H^S\le H/t_{\max}\). Hence
\(
R_i^{\mathrm{SCA}_\theta}
\le
\mathbb E[
\mathbf 1\{E_{i,o}^\theta=1\}
\Phi_i^\theta(z_o^\theta)]
=
b_{i,\theta}\phi_i^{\mathrm{VS}}.
\)
Taking the maximum over \(i\) gives the upper bound.
\end{proof}

Define the contribution-side effective number by
\(
N_{\mathrm{eff}}^{\phi,\mathrm{VS}}
:=
\frac{
\left(\sum_{i=1}^n\phi_i^{\mathrm{VS}}\right)^2
}{
\sum_{i=1}^n\left(\phi_i^{\mathrm{VS}}\right)^2
},
\)
which measures how many economically meaningful searchers are represented by the validator--searcher Shapley contribution vector
\((\phi_i^{\mathrm{VS}})_{i=1}^n\).
\begin{theorem}[Economic performance of the entry-filtered SCA]
\label{thm:sca-theta-economic}
Under the assumptions of Proposition~\ref{prop:sca-theta-reward-bounds}, and assuming that \(\sum_{i=1}^n R_i^{\mathrm{SCA}_\theta}>0\),
\[
\mathcal J_{\mathrm{Sh}}^{\mathrm{SCA}_\theta}
\ge
\frac{4\lambda_\theta\Lambda_\theta}
{(\lambda_\theta+\Lambda_\theta)^2},
\qquad
N_{\mathrm{eff}}^{\mathrm{SCA}_\theta}
\ge
\left(\frac{\lambda_\theta}{\Lambda_\theta}\right)^2
N_{\mathrm{eff}}^{\phi,\mathrm{VS}}.
\]
\end{theorem}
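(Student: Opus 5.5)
The plan is to derive both bounds from the two-sided ratio control of Proposition~\ref{prop:sca-theta-reward-bounds}. Write \(r_i=R_i^{\mathrm{SCA}_\theta}/\phi_i^{\mathrm{VS}}\). The proposition gives \(\lambda_\theta\le r_i\le\Lambda_\theta\) for every \(i\), with \(\Lambda_\theta>0\) because \(\sum_i R_i^{\mathrm{SCA}_\theta}>0\). Both claims are then instances of the standard Pólya--Szegő/Kantorovich type inequality: a vector whose entries lie in a bounded positive interval cannot be too unequal. If \(\lambda_\theta=0\), both right-hand sides vanish and there is nothing to prove, so assume \(\lambda_\theta>0\).

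\emph{Fairness bound.} Since each \(r_i\in[\lambda_\theta,\Lambda_\theta]\), we have \((r_i-\lambda_\theta)(\Lambda_\theta-r_i)\ge0\), that is, \(r_i^2\le(\lambda_\theta+\Lambda_\theta)r_i-\lambda_\theta\Lambda_\theta\). Summing over \(i\) gives
\[
\textstyle\sum_i r_i^2\le(\lambda_\theta+\Lambda_\theta)S-n\lambda_\theta\Lambda_\theta,
\qquad S=\sum_i r_i .
\]
It then suffices to show
\[
S^2\ge \frac{4\lambda_\theta\Lambda_\theta}{(\lambda_\theta+\Lambda_\theta)^2}\,n\Bigl((\lambda_\theta+\Lambda_\theta)S-n\lambda_\theta\Lambda_\theta\Bigr).
\]
Multiply through by \((\lambda_\theta+\Lambda_\theta)^2\) and move everything to one side. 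The difference becomes the perfect square \(\bigl((\lambda_\theta+\Lambda_\theta)S-2n\lambda_\theta\Lambda_\theta\bigr)^2\ge0\). Dividing by \(n\sum_i r_i^2>0\) gives the Jain bound.

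\emph{Effective-number bound.} Write \(R_i=r_i\phi_i^{\mathrm{VS}}\). Then \(\sum_iR_i\ge\lambda_\theta\sum_i\phi_i^{\mathrm{VS}}\) and \(\sum_iR_i^2\le\Lambda_\theta^2\sum_i(\phi_i^{\mathrm{VS}})^2\), using \(\phi_i^{\mathrm{VS}}>0\). Since \(N_{\mathrm{eff}}^{\mathrm{SCA}_\theta}=(\sum_iR_i)^2/\sum_iR_i^2\), dividing these gives
\[
N_{\mathrm{eff}}^{\mathrm{SCA}_\theta}\ge\frac{\lambda_\theta^2(\sum_i\phi_i^{\mathrm{VS}})^2}{\Lambda_\theta^2\sum_i(\phi_i^{\mathrm{VS}})^2}=\Bigl(\frac{\lambda_\theta}{\Lambda_\theta}\Bigr)^2N_{\mathrm{eff}}^{\phi,\mathrm{VS}} .
\]

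\emph{Main obstacle.} The only nontrivial step is the algebra in the fairness bound, namely spotting the perfect square that turns the quadratic-in-\(S\) inequality into a sum-of-squares statement. Everything else follows directly from Proposition~\ref{prop:sca-theta-reward-bounds}, and the assumption \(\sum_iR_i^{\mathrm{SCA}_\theta}>0\) ensures all denominators are nonzero.
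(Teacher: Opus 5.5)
Your proposal is correct and follows the paper's proof: you sum the same pointwise inequality \((r_i-\lambda_\theta)(r_i-\Lambda_\theta)\le 0\) over \(i\), and you bound \(N_{\mathrm{eff}}\) the same way, via \(\sum_i R_i\ge\lambda_\theta\sum_i\phi_i^{\mathrm{VS}}\) and \(\sum_i R_i^2\le\Lambda_\theta^2\sum_i(\phi_i^{\mathrm{VS}})^2\). The one difference is in your favor: the paper closes the Jain bound only by asserting that the worst case splits the ratios between the two endpoints, whereas your perfect square \(\bigl((\lambda_\theta+\Lambda_\theta)S-2n\lambda_\theta\Lambda_\theta\bigr)^2\ge0\) supplies that missing algebraic step rigorously.
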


\begin{proof}
% Let \(q_i:=R_i^{\mathrm{SCA}_\theta}/\phi_i^{\mathrm{VS}}\). By Proposition~\ref{prop:sca-theta-reward-bounds}, \(q_i\in[\lambda_\theta,\Lambda_\theta]\) for every \(i\). Jain's interval bound gives
% \[
% \mathcal J_{\mathrm{Sh}}^{\mathrm{SCA}_\theta}
% \ge
% \frac{4\lambda_\theta\Lambda_\theta}
% {(\lambda_\theta+\Lambda_\theta)^2}.
% \]
Let \(q_i:=R_i^{\mathrm{SCA}_\theta}/\phi_i^{\mathrm{VS}}\). By Proposition~\ref{prop:sca-theta-reward-bounds}, \(q_i\in[\lambda_\theta,\Lambda_\theta]\) for every \(i\). Therefore, for any given \(i\), 
\(
q_i^2
\le
(\lambda_\theta+\Lambda_\theta)q_i-\lambda_\theta\Lambda_\theta, \)
because \((q_i-\lambda_\theta)(q_i-\Lambda_\theta)\le0\). Summing over \(i\) gives
\(
\sum_i q_i^2
\le
(\lambda_\theta+\Lambda_\theta)\sum_i q_i
-
n\lambda_\theta\Lambda_\theta.
\)
For fixed bounds \([\lambda_\theta,\Lambda_\theta]\), the Jain index is minimized when the ratios are split between the two endpoints. Equivalently, the worst case satisfies
\(
\mathcal J(q)
=
\frac{(\sum_iq_i)^2}{n\sum_iq_i^2}
\ge
\frac{4\lambda_\theta\Lambda_\theta}
{(\lambda_\theta+\Lambda_\theta)^2}.
\)
Hence
\(
\mathcal J_{\mathrm{Sh}}^{\mathrm{SCA}_\theta}
\ge
\frac{4\lambda_\theta\Lambda_\theta}
{(\lambda_\theta+\Lambda_\theta)^2}.
\)
For the effective number, using \(R_i^{\mathrm{SCA}_\theta}=q_i\phi_i^{\mathrm{VS}}\),
\(
N_{\mathrm{eff}}^{\mathrm{SCA}_\theta}
=
\frac{(\sum_iq_i\phi_i^{\mathrm{VS}})^2}
{\sum_i(q_i\phi_i^{\mathrm{VS}})^2}
\ge
\frac{\lambda_\theta^2(\sum_i\phi_i^{\mathrm{VS}})^2}
{\Lambda_\theta^2\sum_i(\phi_i^{\mathrm{VS}})^2}
=
\left(\frac{\lambda_\theta}{\Lambda_\theta}\right)^2
N_{\mathrm{eff}}^{\phi,\mathrm{VS}}.
\)
\end{proof}

The entry threshold creates a security--coverage tradeoff. Increasing \(\theta\) reduces the expected admitted population,
\(
\mathbb E[N_o^\theta]
=
\sum_i\alpha_i\Pr(z_{i,o}^\star\ge\theta),
\)
which weakens copied-code Sybil incentives and can reduce the amount of searcher-side burn required for security. However, increasing \(\theta\) can reduce \(a_{i,\theta}\), because some genuine low-value contributions become ineligible for rewards. Thus the threshold changes how much of each searcher's contribution can receive a positive mechanism payment. For \(\theta>0\), the mechanism remains Shapley-based among admitted submissions, but avoids paying the entire long tail of low-certified-value submissions.

\subsection{Asymmetric Entry-Filtered Benchmark}
\label{sec:benchmark}

To evaluate the mechanisms under realistic searcher heterogeneity, we model execution costs by a truncated lognormal distribution. Conditional on participating in opportunity \(o\), searcher \(i\)'s normalized execution cost is
\[
C_i\mid D_{i,o}=1
\sim
\mathrm{TruncLogNormal}(\mu_i,\sigma^2;[\underline c_i,\overline c_i]),
\qquad
z_i^\star=1-C_i.
\]
Table~\ref{tab:benchmark-settings} introduced the four asymmetric searcher populations. Searchers differ in both participation probability and cost efficiency. Lower median cost corresponds to a higher certified normalized net value. Each benchmark scenario has three searcher classes. A class entry \(m(\alpha,\tilde c,[\underline c,\overline c])\) means \(m\) searchers with participation probability \(\alpha\), median cost \(\tilde c\), and truncated-lognormal support \([\underline c,\overline c]\). The four scenarios are ordered from less concentrated to more centralized competition.

\begin{table}[t]
\centering
\caption{Asymmetric searcher populations. Scenarios become more centralized from S1 to S4.}
\label{tab:benchmark-settings}
\resizebox{\textwidth}{!}{
\begin{tabular}{l c c l}
\hline
\textbf{Scenario}
& \(\boldsymbol n\)
& \(\boldsymbol{\mathbb E[N_o]}\)
& \textbf{Class structure } \(m(\alpha,\tilde c,[\underline c,\overline c])\) \\
\hline
S1: broad heterogeneity
& 20 & 9.60
& \(4(0.60,0.15,[0.10,0.25]),\ 8(0.50,0.30,[0.20,0.45]),\ 8(0.40,0.52,[0.35,0.70])\) \\

S2: aligned coverage--quality advantage
& 20 & 10.10
& \(5(0.85,0.14,[0.10,0.25]),\ 7(0.55,0.30,[0.20,0.45]),\ 8(0.25,0.55,[0.40,0.70])\) \\

S3: dominant high-quality class
& 20 & 8.00
& \(3(0.95,0.10,[0.05,0.20]),\ 7(0.45,0.30,[0.18,0.45]),\ 10(0.20,0.52,[0.35,0.70])\) \\

S4: single elite searcher
& 20 & 9.43
& \(1(0.98,0.10,[0.05,0.20]),\ 9(0.55,0.30,[0.18,0.45]),\ 10(0.35,0.52,[0.35,0.70])\) \\
\hline
\end{tabular}}
\end{table}

Table~\ref{tab:benchmark-results} compares the FPA with entry-filtered \(\mathrm{SCA}_\theta\) across different admission targets. For the entry-filtered \(\mathrm{SCA}_\theta\), we report three threshold levels. The threshold \(\theta\) is selected so that the expected admitted population satisfies \(\mathbb E[N_o^\theta]\in\{8,6,4\}\). The cap is then chosen relative to the admitted population, using \(H=\lceil \mathbb E[N_o^\theta]\rceil+3\). For each \(\theta\), \(\kappa_S^\star\) is the least searcher-side security parameter satisfying the Bayesian copied-code Sybil-resistance constraints over every realized value \(z\), every copy size \(q=2,\ldots,H+1\), and every feasible copied-value profile \(\mathbf z^c\in\mathcal Z_q^c(z)\). Notably, across all twelve benchmark calibrations, the binding deviation is the two-copy attack at the admission boundary, with \(q=2\), \(z=\theta\), and \(\mathbf z^c=(\theta,\theta)\). Hence, in these benchmark designs, the numerically worst deviation is a two-copy attack, although this need not hold in general.

\begin{table}[t]
\centering
\caption{FPA versus entry-filtered SCA. Within each scenario, the SCA rows show increasingly selective admission rules; the FPA row is reported once as the winner-take-all benchmark.}
\label{tab:benchmark-results}
\resizebox{0.8\textwidth}{!}{
\begin{tabular}{l l c c c c c c c}
\hline
\textbf{Scenario}
& \textbf{Mechanism}
& \(\boldsymbol{\mathbb E[N_o^\theta]}\)
& \(\boldsymbol{\theta}\)
& \(\boldsymbol H\)
& \(\boldsymbol{\kappa_S^\star}\)
& \(\boldsymbol{\mathcal J_{\mathrm{Sh}}}\)
& \(\boldsymbol{N_{\mathrm{eff}}}\)
& \textbf{Note} \\
\hline

S1
& \(\mathrm{SCA}_\theta\)
& 8 & 0.495 & 11 & 0.371
& \cellcolor{green!25}0.948
& \cellcolor{green!25}11.10
& broad admission \\

& \(\mathrm{SCA}_\theta\)
& 6 & 0.620 & 9 & 0.293
& \cellcolor{yellow!25}0.706
& \cellcolor{yellow!25}9.67
& intermediate admission \\

& \(\mathrm{SCA}_\theta\)
& 4 & 0.719 & 7 & 0.165
& \cellcolor{orange!20}0.574
& \cellcolor{orange!20}7.19
& selective admission \\

& FPA
& -- & -- & -- & --
& \cellcolor{red!18}0.244
& \cellcolor{red!18}4.26
& winner-take-all benchmark \\
\hline

S2
& \(\mathrm{SCA}_\theta\)
& 8 & 0.580 & 11 & 0.415
& \cellcolor{green!25}0.661
& \cellcolor{green!25}8.72
& broad admission \\

& \(\mathrm{SCA}_\theta\)
& 6 & 0.709 & 9 & 0.356
& \cellcolor{yellow!25}0.536
& \cellcolor{yellow!25}6.96
& intermediate admission \\

& \(\mathrm{SCA}_\theta\)
& 4 & 0.790 & 7 & 0.265
& \cellcolor{orange!20}0.288
& \cellcolor{orange!20}5.21
& selective admission \\

& FPA
& -- & -- & -- & --
& \cellcolor{red!18}0.251
& \cellcolor{red!18}5.00
& winner-take-all benchmark \\
\hline

S3
& \(\mathrm{SCA}_\theta\)
& 8 & 0.300 & 11 & 0.406
& \cellcolor{green!25}0.996
& \cellcolor{green!25}6.07
& broad admission \\

& \(\mathrm{SCA}_\theta\)
& 6 & 0.598 & 9 & 0.355
& \cellcolor{yellow!25}0.693
& \cellcolor{yellow!25}5.45
& intermediate admission \\

& \(\mathrm{SCA}_\theta\)
& 4 & 0.734 & 7 & 0.284
& \cellcolor{orange!20}0.413
& \cellcolor{orange!20}4.01
& selective admission \\

& FPA
& -- & -- & -- & --
& \cellcolor{red!18}0.150
& \cellcolor{red!18}3.00
& winner-take-all benchmark \\
\hline

S4
& \(\mathrm{SCA}_\theta\)
& 8 & 0.467 & 11 & 0.388
& \cellcolor{green!25}0.970
& \cellcolor{green!25}3.84
& broad admission \\

& \(\mathrm{SCA}_\theta\)
& 6 & 0.597 & 9 & 0.322
& \cellcolor{yellow!25}0.726
& \cellcolor{yellow!25}3.78
& intermediate admission \\

& \(\mathrm{SCA}_\theta\)
& 4 & 0.683 & 7 & 0.201
& \cellcolor{orange!20}0.500
& \cellcolor{orange!20}3.23
& selective admission \\

& FPA
& -- & -- & -- & --
& \cellcolor{red!18}0.075
& \cellcolor{red!18}1.04
& winner-take-all benchmark \\
\hline
\end{tabular}}
\end{table}

Table~\ref{tab:benchmark-results} supports the main decentralization claim. As competition becomes more concentrated, the FPA reward allocation collapses: \(N_{\mathrm{eff}}^{\mathrm{FPA}}\) falls from \(4.26\) in S1 to \(1.04\) in S4, and \(\mathcal J_{\mathrm{Sh}}^{\mathrm{FPA}}\) falls from \(0.244\) to \(0.075\). By contrast, the entry-filtered \(\mathrm{SCA}_\theta\) consistently preserves a broader reward base. In the most centralized case, S4, the FPA is effectively controlled by one elite searcher, while \(\mathrm{SCA}_\theta\) with \(\mathbb E[N_o^\theta]=6\) achieves \(N_{\mathrm{eff}}^{\mathrm{SCA}_\theta}=3.78\) and \(\mathcal J_{\mathrm{Sh}}^{\mathrm{SCA}_\theta}=0.726\).

The other side of the result is the security--coverage tradeoff. Lower thresholds admit more searchers and improve decentralization, but require larger \(\kappa_S^\star\). Higher thresholds reduce \(\kappa_S^\star\), but exclude more low-value submissions from reward sharing. Thus, relative to the FPA, the entry-filtered \(\mathrm{SCA}_\theta\) offers substantially better reward decentralization, with the threshold \(\theta\) controlling the cost of Bayesian Sybil resistance.

Additional numerical results on increasing elite execution advantage, together with the payment decomposition and validator-side security calibration, are reported in Appendix~\ref{app:revenue-decomposition}.

\section{Empirical Benchmarks}
We evaluate the performance of searchers by analyzing historical MEV transactions on Ethereum\footnote{Our implementation is available at {\url{https://gitlab.esat.kuleuven.be/mev/searcher_game}}.}. 
In practice, a searcher entity may deploy multiple contracts to identify and capture MEV opportunities. However, establishing correlations between different contracts controlled by the same searcher entity is outside the scope of this paper. 
For the purposes of this section, the terms searcher and contract are treated as interchangeable.

\smallskip\noindent\textbf{Dataset.}
We use a dataset~\cite{DBLP:conf/ndss/LuoLLHLMSC26} developed by Luo et al., which contains 577,264 arbitrage transactions from block 16308203 to 18908892. 
In total, there are 855 contracts generating a total revenue of 37.79 million dollars and a total profit of 8.88 million dollars.

\smallskip\noindent\textbf{Searcher strategies.}
Arbitrage searchers implement one or more trading strategies to capture MEV opportunities. 
These strategies often involve constructing cyclic transactions across decentralized exchanges, in which two or more tokens are swapped within a single atomic execution. 
When multiple searchers identify the same opportunity, they compete for inclusion by submitting higher bids.

We identify two commonly implemented arbitrage strategies.
One prioritizes selectivity, targeting scarce opportunities leading to high profit margins. 
The other prioritizes volume, executing a large number of transactions with relatively low profit per transaction. 
%This distinction mirrors patterns observed among arbitrageurs in traditional financial markets, suggesting that strategy choice is closely tied to the type of opportunity covered. %, as discussed in~\cite{}.

The contracts in the dataset are therefore classified into three strategy profiles: (C1) margin-oriented, which contains contracts in the top 10\% by average profit per transaction; (C2) volume-oriented, which are those in the top 10\% by total transaction count; and (C3) mixed-profile contracts, whose observed performance does not show a dominant margin-oriented or volume-oriented profile. 

Among the classified contracts, ten satisfy both the margin-oriented and volume-oriented criteria. 
We assign these overlapping contracts to the volume-oriented group, since high transaction volume more directly captures repeated execution behavior. 
In addition, two contracts \texttt{0x03c6...67cc} and \texttt{0xeeaa...3376} are excluded from margin-oriented group after identifying volume-oriented traits in their behavior.

A summary of the classified contracts is shown in Table~\ref{tab:strategy-basic-summary}. 
The profit retained by a searcher varies substantially depending on the type of opportunity covered, with the average profit per transaction in the C1 group being approximately 30 times that of the C2 group.
\begin{table}[t]
\centering
\caption{Summary of Searcher Strategies.}
\label{tab:strategy-basic-summary}
\resizebox{0.8\textwidth}{!}{
\begin{tabular}{lrrrrr}
\hline
Strategy & \#Contracts & \#Tx & Revenue Share & Profit Share & Avg. Profit / Tx \\
\hline
C1 & 74  & 3,323   & 4.35\%  & 14.02\% & \$374.74 \\
C2 & 88  & 510,620 & 76.29\% & 73.57\% & \$12.79 \\
C3 & 693 & 63,321  & 19.36\% & 12.40\% & \$17.39 \\
\hline
\end{tabular}
}
\end{table}

\smallskip\noindent\textbf{Execution efficiency across strategies.}
Within each strategy group, contracts are further classified into three tiers based on their contribution to the group's total realized profit. 
Contracts are first ranked in descending order by their total profit within the strategy. 
We then compute the cumulative share of positive realized profit accounted for by this ranking. 
Elite contracts are defined as the smallest set of top-ranked contracts whose cumulative positive profit reaches 60\% of the strategy’s total positive profit. 
Middle contracts are the next set of ranked contracts needed to raise the cumulative positive-profit share from 60\% to 90\%.
Tail contracts are all remaining contracts, including contracts with zero or negative realized profit.
For each tier, we show the median profit-to-revenue ratio, which measures the share of generated revenue retained by searchers. 
A higher profit-to-revenue ratio indicates that searchers retain more value after paying the costs required to win inclusion.

The performance of each tier is summarized in Table~\ref{tab:strategy-tier-summary}. 
A strong concentration is observed within each strategy group, with a small fraction of top-tier contracts accounting for over 60\% of total realized profit.
A stark contrast emerges between the two strategies: the bottom tier of C1 retains a higher median profit-to-revenue ratio than the top tier of C2, 35.66\% and 17.67\% respectively. 
This hints that C1 faces considerably lower competition intensity.

% \begin{table}[t]
% \centering
% \caption{Summary of contract tiers within each strategy. (M) denotes median. }
% \label{tab:strategy-tier-summary}
% \begin{tabular}{lrrrrr}
% \hline
% Tier & \#Contracts & \#Tx & Profit/Tx (M) & Profit Share & Profit/Revenue (M)\\
% \hline
% C1-elite  & 9   & 1,950     & \$561.76 & 62.08\% & 71.15\% \\
% C1-middle & 17  & 988     & \$491.89 & 28.15\% & 40.28\% \\
% C1-tail   & 48  & 385      & \$274.35 & 9.77\%  & 35.66\% \\\hline
% C2-elite  & 11  & 53,719  & \$128.11 & 61.65\% & 17.67\% \\
% C2-middle & 19  & 99,565   & \$38.32  & 28.86\% & 5.41\% \\
% C2-tail   & 58  & 357,336   & \$1.04   & 9.48\%  & 1.26\% \\\hline
% C3-elite  & 26  & 14,836     & \$45.37  & 62.12\% & 27.25\% \\
% C3-middle & 51  & 15,782     & \$23.47  & 30.68\% & 18.74\% \\
% C3-tail   & 616 & 32,703       & \$0.02   & 7.20\%  & 1.11\% \\
% \hline
% \end{tabular}
% \end{table}
\begin{table}[b]
\centering
\caption{Summary of contract tiers within each strategy. (M) denotes median. }
\label{tab:strategy-tier-summary}
{\small
\resizebox{0.8\textwidth}{!}{
\begin{tabular}{lrrrrr}
\hline
Tier & \#Contracts & \#Tx & Profit/Tx (M) & Profit Share & Profit/Revenue (M)\\
\hline
C1-elite  & 9   & 1,950     & \$561.76 & 62.08\% & 71.15\% \\
C1-middle & 17  & 988     & \$491.89 & 28.15\% & 40.28\% \\
C1-tail   & 48  & 385      & \$274.35 & 9.77\%  & 35.66\% \\\hline
C2-elite  & 11  & 53,719  & \$128.11 & 61.65\% & 17.67\% \\
C2-middle & 19  & 99,565   & \$38.32  & 28.86\% & 5.41\% \\
C2-tail   & 58  & 357,336   & \$1.04   & 9.48\%  & 1.26\% \\\hline
C3-elite  & 26  & 14,836     & \$45.37  & 62.12\% & 27.25\% \\
C3-middle & 51  & 15,782     & \$23.47  & 30.68\% & 18.74\% \\
C3-tail   & 616 & 32,703       & \$0.02   & 7.20\%  & 1.11\% \\
\hline
\end{tabular}
}
}
\end{table}

Figure~\ref{fig:S1-S2-profit-revenue-ration-distribution} illustrates the transaction distribution across profit-to-revenue ratio ranges. 
The profit-to-revenue ratio measures the fraction of arbitrage revenue that remains as realized profit after fees and bids, and thus provides an observable proxy for the retained margin of included transactions. 
Within each strategy, elite contracts generally exhibit higher median profit-to-revenue ratios than middle and tail contracts, indicating that the contracts accounting for the largest share of realized profit also tend to retain a larger fraction of revenue on a per-transaction basis. 
Across strategies, C1 displays substantially higher retained margins than C2 and C3. 
This pattern is consistent with the interpretation that C1 searchers target arbitrage opportunities with lower relative execution or bidding costs, whereas C2 searchers operate in a more cost-intensive environment associated with high-frequency opportunity capture. 
Since the dataset contains only successful transactions and does not observe failed bids, the profit-to-revenue ratio should be interpreted as an ex-post measure of realized margin rather than a direct measure of auction-level bidding power.

\begin{figure}[t]
\centering
\includegraphics[scale=0.3]{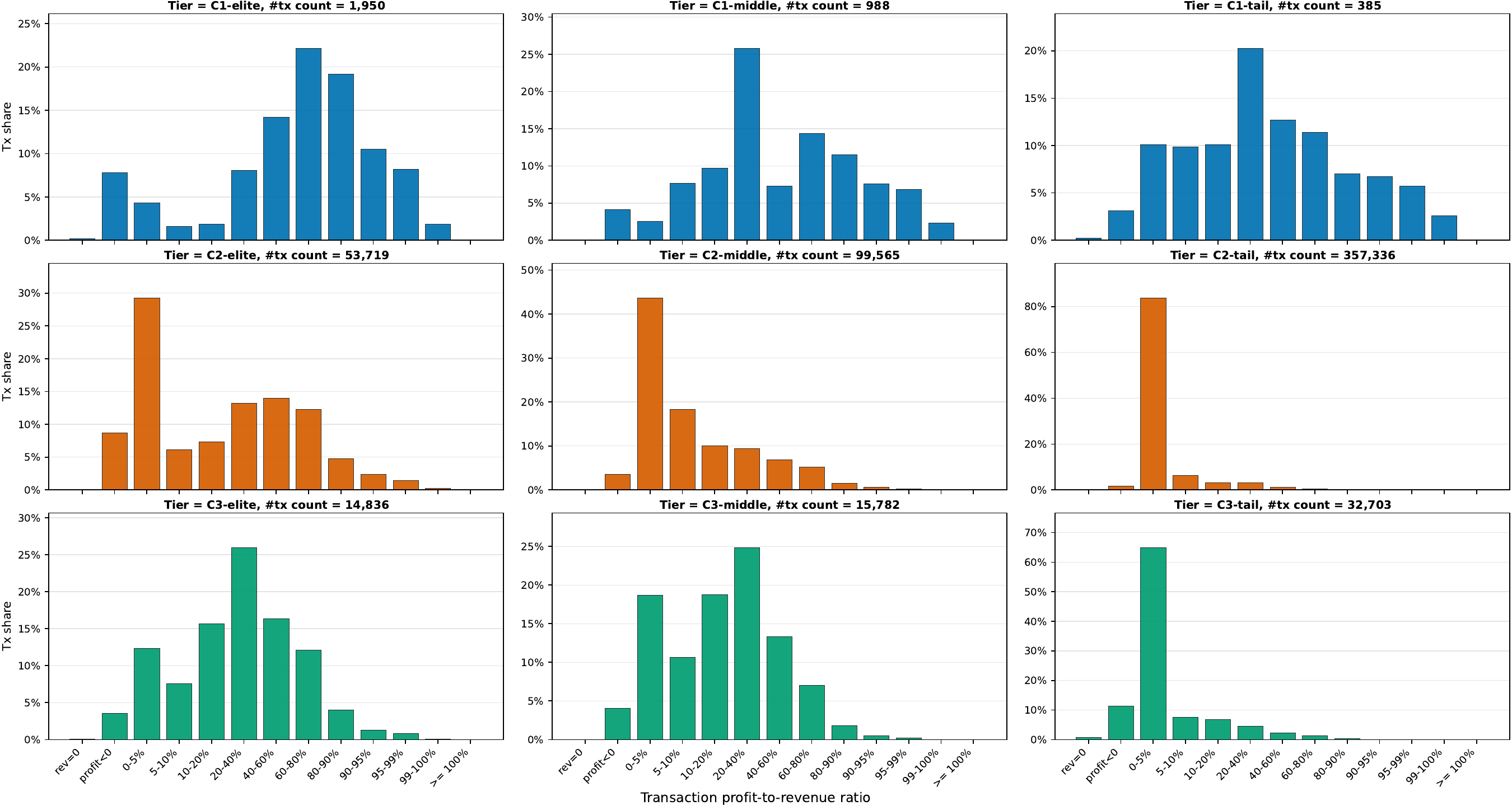}
\caption{Transaction distribution over profit-to-revenue ratio bins, grouped by profit tier. 
The x-axis denotes profit-to-revenue ratio bins, while the y-axis reports the percentage share of transactions.}
\label{fig:S1-S2-profit-revenue-ration-distribution}
\end{figure}

\smallskip\noindent\textbf{Empirical calibration and limitations.}
Table~\ref{tab:empirical-extracted-distributions} reports the truncated-lognormal cost distributions extracted from the Ethereum arbitrage data. The opportunity types \(C_1\) and \(C_2\) correspond to different empirical arbitrage environments: \(C_1\) is margin-oriented, while \(C_2\) is volume-oriented. Within each opportunity type, contracts are grouped into elite, middle, and tail tiers. A class entry \(m(\alpha,\tilde c,[\underline c,\overline c])\) represents \(m\) contracts with relative coverage \(\alpha\), median normalized cost \(\tilde c\), and truncated-lognormal support \([\underline c,\overline c]\).

A limitation is that the dataset contains only successful on-chain transactions. We therefore observe included winners, but not the full set of searchers competing for each opportunity, unsuccessful submissions, or builder- and relay-side auction messages. Since the model is defined over the full opportunity-level competition environment, the empirical calibration should be interpreted as a contract-level proxy rather than a fully identified structural estimate. Builder, relay, or private order-flow data would permit a more precise calibration of participation probabilities and value distributions.

\begin{table}[b]
\centering
\caption{Empirically extracted truncated-lognormal cost distributions.}
\label{tab:empirical-extracted-distributions}

\small
\setlength{\tabcolsep}{4pt}
\renewcommand{\arraystretch}{0.85}

\begin{tabular}{l l c c c}
\hline
\textbf{Opportunity type}
& \textbf{Tier}
& \(\boldsymbol m\)
& \(\boldsymbol\alpha\)
& \(\boldsymbol{\tilde c,[\underline c,\overline c]}\) \\
\hline
\(C_1\) & Elite  & 9  & 0.950 & \(0.2885,[0.01,0.95]\) \\
\(C_1\) & Middle & 17 & 0.255 & \(0.5972,[0.05,1.00]\) \\
\(C_1\) & Tail   & 48 & 0.035 & \(0.6434,[0.01,1.00]\) \\
\hline
\(C_2\) & Elite  & 11 & 0.753 & \(0.8233,[0.01,1.00]\) \\
\(C_2\) & Middle & 19 & 0.808 & \(0.9459,[0.10,1.00]\) \\
\(C_2\) & Tail   & 58 & 0.950 & \(0.9874,[0.20,1.00]\) \\
\hline
\end{tabular}
\end{table}

Table~\ref{tab:empirical-fpa-sca-n14} compares the FPA and the entry-filtered \(\mathrm{SCA}_\theta\) under these empirical settings. The raw expected active populations are \(\mathbb E[N_o]=14.565\) for \(C_1\) and \(\mathbb E[N_o]=78.735\) for \(C_2\). For each opportunity type, \(\theta\) is selected so that \(\mathbb E[N_o^\theta]=14\), and the cap is set to \(H=17\). Thus, the target corresponds to near-full admission in \(C_1\), but selective admission in \(C_2\).

For each \(\theta\), \(\kappa_S^\star\) is the smallest searcher-side parameter satisfying the Bayesian copied-code Sybil-resistance inequalities for every participating class, every realized value \(z\ge\theta\), every \(q=2,\ldots,H+1\), and every feasible copied-value profile \(\mathbf z^c\in\mathcal Z_q^c(z)\). The copied-value profile is selected before the rival profile is realized. In both empirical calibrations, the numerical supremum is attained by a two-copy exact-copy deviation: for \(C_1\), by a middle-tier contract at \(z\approx0.143>\theta\), and for \(C_2\), by a middle-tier contract at \(z=\theta\).

The empirical comparison suggests that \(\mathrm{SCA}_\theta\) is most valuable when FPA rewards are relatively concentrated. This is the case for \(C_1\): the contribution-adjusted Jain index increases from \(0.175\) under the FPA to \(0.983\) under \(\mathrm{SCA}_\theta\), while \(N_{\mathrm{eff}}\) increases from \(9.10\) to \(11.11\). The increase in \(N_{\mathrm{eff}}\) is more modest because the calibration treats the nine elite \(C_1\) contracts as distinct searchers with identical class-level primitives. In practice, these contracts may be heterogeneous or share common ownership, implying stronger FPA reward concentration.

For \(C_2\), the FPA is already highly decentralized at the contract level, with \(\mathcal J_{\mathrm{Sh}}^{\mathrm{FPA}}=0.997\) and \(N_{\mathrm{eff}}^{\mathrm{FPA}}=79.54\). The entry-filtered SCA therefore produces little additional contribution-adjusted fairness, but increases the effective reward base to \(N_{\mathrm{eff}}^{\mathrm{SCA}_\theta}=84.05\) while satisfying the reported Bayesian copied-code Sybil-resistance constraints.
Overall, the empirical evidence supports the interpretation that the entry-filtered SCA is most consequential in relatively concentrated opportunity classes such as \(C_1\). In already diffuse classes such as \(C_2\), its economic effect is smaller, and its primary role is to preserve broad reward allocation under an explicit searcher-side Bayesian security calibration.

\begin{table}[t!]
\centering
\caption{FPA versus entry-filtered SCA for the empirical Ethereum-calibrated settings with \(\mathbb E[N_o^\theta]=14\). The SCA parameter \(\kappa_S^\star\) is calibrated using the full Bayesian copied-value supremum.}
\label{tab:empirical-fpa-sca-n14}
\resizebox{\textwidth}{!}{
\begin{tabular}{l l c c c c c c c}
\hline
\textbf{Opportunity type}
& \textbf{Mechanism}
& \(\boldsymbol{\mathbb E[N_o^\theta]}\)
& \(\boldsymbol{\theta}\)
& \(\boldsymbol H\)
& \(\boldsymbol{\kappa_S^\star}\)
& \(\boldsymbol{\mathcal J_{\mathrm{Sh}}}\)
& \(\boldsymbol{N_{\mathrm{eff}}}\)
& \textbf{Note} \\
\hline
\(C_1\)
& \(\mathrm{SCA}_\theta\)
& 14 & 0.141 & 17 & 0.527
& \cellcolor{green!25}0.983
& \cellcolor{green!25}11.11
& near-full admission \\

& FPA
& -- & -- & -- & --
& \cellcolor{orange!18}0.175
& \cellcolor{orange!18}9.10
& winner-take-all benchmark \\
\hline
\(C_2\)
& \(\mathrm{SCA}_\theta\)
& 14 & 0.392 & 17 & 0.457
& \cellcolor{green!25}0.999
& \cellcolor{green!25}84.05
& selective admission \\

& FPA
& -- & -- & -- & --
& \cellcolor{orange!18}0.997
& \cellcolor{orange!18}79.54
& winner-take-all benchmark \\
\hline
\end{tabular}}
\end{table}

\section{Conclusion}
\label{sec:conclusion}

This paper argues that searcher competition is a distinct and important locus of MEV centralization. While the current first-price, winner-take-all auction is robust against copied-code identity splitting, its economic allocation is inherently rank-based: rewards accrue to the searcher most likely to submit the highest-value execution, rather than to searchers in proportion to their marginal contribution. Under heterogeneous access and execution quality, rank dominance can amplify small advantages into persistent reward concentration.

Motivated by this limitation, we introduced the entry-filtered Shapley-capped auction as one possible alternative. The mechanism preserves efficient execution by selecting the highest certified-value submission, but reallocates part of the retained surplus according to contribution among admitted high-quality submissions. Its cap, entry filter, and burn parameters expose the central design tradeoff: broader reward sharing improves contribution-adjusted fairness and reward decentralization, but must be disciplined by explicit Bayesian security constraints against copied-code Sybil deviations and validator--searcher coalitions. Overall, the results show that decentralization in MEV markets is not only a matter of infrastructure or order-flow access but also a mechanism-design problem requiring fairness, efficiency, and permissionless security to be addressed jointly.

%%
%% Bibliography
%%

%% Please use bibtex, 

\bibliography{lipics-v2021-sample-article}

\appendix

\section{Security Analysis of the FPA}
\label{appendix:Sec_FPA}
For this section, we impose the following assumptions on the FPA bidding functions and tie-breaking rule. For every real searcher \(i\), all submissions controlled by \(i\) for opportunities of class \(\tau(o)\), including copied-code submissions, use the same searcher-specific bidding function \(\beta_{i,\tau(o)}\). The bidding function is feasible and monotone: \(0\le \beta_{i,\tau(o)}(z)\le z\), \(\beta_{i,\tau(o)}\) is non-decreasing, and the retained surplus \(z-\beta_{i,\tau(o)}(z)\) is non-decreasing in \(z\). Ties are broken by the fixed priority order specified in the FPA mechanism, and copied-code submissions cannot obtain higher tie-breaking priority than the original submission they copy.

\begin{theorem}[Ex-post copied-code Sybil resistance of the FPA]
Under the assumptions above, the FPA is ex-post copied-code Sybil-resistant.
\end{theorem}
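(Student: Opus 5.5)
Fix an opportunity \(o\), a real searcher \(i\), an original submission \(b\in\mathcal B_i(o)\) with value \(z_b\), a realized rival set \(\mathcal B_{-i}(o)\), and a copied-code deviation \(\sigma_i=(\hat b_1,\ldots,\hat b_k)\) with \(z_{\hat b_\ell}\le z_b\). The plan is a direct state-by-state comparison of \(U_i^{\mathrm{FPA}}\) under \(\{b\}\cup\mathcal B_{-i}(o)\) and under \(\sigma_i\cup\mathcal B_{-i}(o)\). The key structural fact is that in the FPA searcher \(i\) receives a positive payoff only through the single winning submission. Hence \(U_i^{\mathrm{FPA}}(\sigma_i\cup\mathcal B_{-i}(o))\) equals either \(0\) or \(V_o\bigl(z_{\hat b_\ell}-\beta_{i,\tau(o)}(z_{\hat b_\ell})\bigr)\) for the unique winning copy \(\hat b_\ell\). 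Sybil splitting therefore cannot aggregate rewards; the only question is whether a copy can win in a state where \(b\) would not, or win with more retained surplus.

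\textbf{Step 1 (payoff per winning copy).} If the winner under the deviation is a copy \(\hat b_\ell\), then \(z_{\hat b_\ell}\le z_b\). Because the retained surplus \(z-\beta_{i,\tau(o)}(z)\) is non-decreasing, the copy's payoff is at most \(V_o(z_b-\beta_{i,\tau(o)}(z_b))\), which is nonnegative since \(\beta\le z\).

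\textbf{Step 2 (winning event).} If some copy wins, then \(b\) also wins in the original profile. All copies use the same bidding function \(\beta_{i,\tau(o)}\), which is non-decreasing, so \(\beta_{i,\tau(o)}(z_{\hat b_\ell})\le\beta_{i,\tau(o)}(z_b)\). Take any rival \(b'\in\mathcal B_{-i}(o)\). Since \(\hat b_\ell\) beats \(b'\), either \(\mathsf{bid}_{b'}<\mathsf{bid}_{\hat b_\ell}\le\mathsf{bid}_b\), in which case \(b\) strictly beats \(b'\), or the bids tie with \(\hat b_\ell\) having priority over \(b'\). In the tie case, either \(\mathsf{bid}_b>\mathsf{bid}_{b'}\), or \(\mathsf{bid}_b=\mathsf{bid}_{b'}\) and \(b\) has priority at least that of \(\hat b_\ell\) by the tie-breaking assumption on copies, so \(b\) again wins against \(b'\). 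Rival bids are unchanged between the two profiles because rivals' submissions are fixed. Therefore \(b\) is the FPA winner of \(\{b\}\cup\mathcal B_{-i}(o)\), and \(U_i^{\mathrm{FPA}}(\{b\}\cup\mathcal B_{-i}(o))=V_o(z_b-\beta_{i,\tau(o)}(z_b))\).

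\textbf{Step 3 (combine).} If no copy wins, the deviation payoff is \(0\), which is at most the original payoff, since that payoff is always nonnegative. Otherwise Steps 1 and 2 give the inequality directly.

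The main obstacle is the tie-breaking case in Step 2. The priority assumption must be read as saying that \(b\) would occupy a priority slot at least as high as any of its copies relative to every rival. I would state this explicitly before invoking it.
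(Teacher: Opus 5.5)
Your proposal is correct and follows the paper's argument. Because all copies use the same non-decreasing bidding function and cannot gain tie-breaking priority over \(b\), a copy can win only in states where \(b\) itself would win; monotone retained surplus then bounds the winning copy's payoff by that of \(b\). Your version is slightly more careful than the paper's proof, which asserts ``if \(b\) loses, every copied submission also loses'' without working through the tie case or the counterfactual reading of the priority assumption that you explicitly flag.
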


\begin{proof}
Fix an opportunity \(o\), a real searcher \(i\), an original submission \(b\in\mathcal B_i(o)\), and a realized rival submission set \(\mathcal B_{-i}(o)\). Consider any copied-code Sybil deviation \(\sigma_i=(\hat b_1,\ldots,\hat b_k)\) from \(b\). By definition, \(z_{\hat b_\ell}\le z_b \) for every \(\ell=1,\ldots,k.\)
Since all submissions controlled by \(i\) use the same non-decreasing bidding function \(\beta_{i,\tau(o)}\),
\[
\beta_{i,\tau(o)}(z_{\hat b_\ell})
\le
\beta_{i,\tau(o)}(z_b)
\quad
\text{for every } \ell=1,\ldots,k.
\]
Thus no copied submission can submit a higher transfer offer than the original submission. Moreover, if a copied submission submits the same transfer offer as \(b\), it cannot have higher tie-breaking priority than \(b\).

If \(b\) loses, then every copied submission also loses, and the deviating searcher receives zero. If \(b\) wins, any winning copied submission \(\hat b_\ell\) satisfies \(z_{\hat b_\ell}\le z_b\). Since \(z-\beta_{i,\tau(o)}(z)\) is non-decreasing,
\(
z_{\hat b_\ell}-\beta_{i,\tau(o)}(z_{\hat b_\ell})
\le
z_b-\beta_{i,\tau(o)}(z_b).
\)
Therefore the deviating searcher's total payoff cannot exceed its payoff from \(b\). Hence,
\[
U_i^{\mathrm{FPA}}(\{b\}\cup\mathcal B_{-i}(o))
\ge
U_i^{\mathrm{FPA}}(\sigma_i\cup\mathcal B_{-i}(o)).
\]
\end{proof}

Since ex-post copied-code Sybil resistance holds state by state, the Bayesian version follows immediately.

\begin{corollary}[Bayesian copied-code Sybil resistance of the FPA]
Under the assumptions of the preceding theorem, the FPA is Bayesian copied-code Sybil-resistant.
\end{corollary}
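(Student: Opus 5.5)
The plan is to derive the Bayesian statement directly from the ex-post theorem just proved, by integrating its pointwise inequality over the interim distribution of rival submissions. First I fix an opportunity \(o\), a real searcher \(i\), a value \(z\in[\underline z_{i,\tau(o)},\overline z_{i,\tau(o)}]\), an original submission \(b\) with \(z_b=z\), and a copied-code Sybil deviation \(\sigma_i\) from \(b\). The ex-post copied-code Sybil resistance theorem then gives, for every realization of \(\mathcal B_{-i}(o)\),
\[
U_i^{\mathrm{FPA}}(\{b\}\cup\mathcal B_{-i}(o))-U_i^{\mathrm{FPA}}(\sigma_i\cup\mathcal B_{-i}(o))\ge0 .
\]

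The one point that needs care is that both sides of the Bayesian inequality must be expectations over the \emph{same} conditional law of \(\mathcal B_{-i}(o)\). Under the deadline-based information structure of Sect.~\ref{subsec:system-model}, searcher \(i\) fixes \(\sigma_i\) before observing rivals. The rivals' participation \(D_{j,o}\), their values \(z^\star_{j,o}\), and their bids \(\beta_{j,\tau(o)}(z^\star_{j,o})\) are determined by their own primitives. None of them depends on whether \(i\) submits \(b\) or \(\sigma_i\). Hence the conditional distribution of \(\mathcal B_{-i}(o)\) given \(D_{i,o}=1,\ z^\star_{i,o}=z\) is the same under both strategies, and \(\sigma_i\) is a deterministic function of \(b\), not of the realized rival set.

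With that in place, I take the conditional expectation of the pointwise inequality given \(D_{i,o}=1,\ z^\star_{i,o}=z\). Payoffs are bounded, since they lie in \([0,V_o]\) because \(0\le\beta\le z\le1\), so integrability is immediate. Monotonicity of expectation then yields exactly the inequality in the definition of Bayesian copied-code Sybil resistance.

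I expect no real obstacle here. The only step that needs an explicit sentence is the independence of the rival profile from \(i\)'s deviation, which is what justifies integrating both sides against the same measure.
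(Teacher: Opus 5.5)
Your proposal is correct and follows the paper's route. The paper observes that the ex-post copied-code Sybil resistance inequality holds state by state, so taking the conditional expectation given \(D_{i,o}=1,\ z^\star_{i,o}=z\) immediately yields the Bayesian version; your remarks that both sides integrate against the same conditional law of \(\mathcal B_{-i}(o)\), and that payoffs are bounded, spell out what the paper leaves implicit.
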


We now consider validator--searcher coalition deviations under the same copied-code restriction. In the FPA, the winning transfer is paid by the searcher and received by the validator, so it is internal to the deviating searcher--validator coalition. However, under the assumptions above, copied-code submissions cannot outbid the original submission or obtain higher tie-breaking priority in a tie.

\begin{theorem}[Ex-post validator--searcher coalition resistance of the FPA]
Under the assumptions above, the FPA is ex-post validator--searcher coalition-resistant under copied-code deviations.
\end{theorem}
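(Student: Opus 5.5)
The plan is to track the joint payoff of searcher \(i\) and the validator state by state, and to show that it equals (a monotone function of) the certified value of the executed submission when \(i\) wins, and zero when \(i\) loses. In the FPA the winning transfer \(\widetilde{\mathsf{bid}}_{b^\star}\) passes from the winner to the validator, so it cancels inside the coalition. If the winner is owned by \(i\), the joint payoff is \((\tilde z_{b^\star}-\widetilde{\mathsf{bid}}_{b^\star})+\widetilde{\mathsf{bid}}_{b^\star}=\tilde z_{b^\star}\). If the winner is a rival submission \(b'\), the joint payoff is just the validator's receipt \(\widetilde{\mathsf{bid}}_{b'}\), since \(i\) gets zero.

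Fix \(o\), \(i\), an original \(b\), the rival set \(\mathcal B_{-i}(o)\), and a deviation \(\sigma_i=(\hat b_1,\ldots,\hat b_k)\). As in the proof of the preceding theorem, every copy satisfies \(z_{\hat b_\ell}\le z_b\). Since all of \(i\)'s submissions share the same non-decreasing \(\beta_{i,\tau(o)}\), we get \(\widetilde{\mathsf{bid}}_{\hat b_\ell}\le\widetilde{\mathsf{bid}}_b\). Copies also cannot beat \(b\)'s tie-breaking priority. Rival bids do not change under the deviation. I then split into two cases according to whether \(b\) wins against \(\mathcal B_{-i}(o)\).

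\emph{Case 1: \(b\) loses to a rival \(b'\).} Then every copy also loses to \(b'\). Indeed, \(b'\) has a strictly higher bid than \(b\), or an equal bid with higher priority, and each copy's bid is at most \(b\)'s with priority no better than \(b\)'s. The highest rival is \(b'\) in both profiles, so the winner is \(b'\) under both the honest and the deviating profile. The joint payoff is \(\widetilde{\mathsf{bid}}_{b'}\) in both, and equality holds.

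\emph{Case 2: \(b\) wins.} The honest joint payoff is \(\tilde z_b=V_oz_b\). Under the deviation, either some copy \(\hat b_\ell\) wins, giving joint payoff \(V_oz_{\hat b_\ell}\le V_oz_b\). Or a rival \(b''\) wins, giving joint payoff \(\widetilde{\mathsf{bid}}_{b''}\). For the second subcase I bound \(\widetilde{\mathsf{bid}}_{b''}\) using the fact that \(b\) beat \(b''\) in the honest profile: \(\widetilde{\mathsf{bid}}_{b''}\le\widetilde{\mathsf{bid}}_b=V_o\beta_{i,\tau(o)}(z_b)\le V_oz_b\), by feasibility \(\beta\le z\). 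The rival set is unchanged, so \(b''\) was present and lost to \(b\), which is what justifies the first inequality. In every subcase the deviating joint payoff is at most \(\tilde z_b\).

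The only delicate step is this last subcase, where a rival overtakes after \(b\) is withdrawn. It is handled by combining the feasibility constraint \(\beta_{i,\tau(o)}(z)\le z\) with the fact that the rival's bid did not exceed \(b\)'s. Ties there also need care: if \(\widetilde{\mathsf{bid}}_{b''}=\widetilde{\mathsf{bid}}_b\), the inequality still holds weakly. Combining the two cases gives the ex-post inequality for every state, which completes the proof. As with Sybil resistance, the Bayesian version then follows by taking conditional expectations.
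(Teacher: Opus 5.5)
Your proof is correct and follows the paper's argument essentially step for step. It uses the same split on whether $b$ wins against $\mathcal B_{-i}(o)$, and the same two subcases when it does: a copy wins, giving at most $V_oz_b$, or a rival $b''$ overtakes, bounded via feasibility of $\beta_{i,\tau(o)}$ by $\widetilde{\mathsf{bid}}_{b''}\le\widetilde{\mathsf{bid}}_b\le V_oz_b$. The only blemish is your opening roadmap's claim that the joint payoff is zero when $i$ loses; it is the rival winner's bid, which is exactly what you then use correctly in Case~1.
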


\begin{proof}
Fix an opportunity \(o\), a real searcher \(i\), an original submission \(b\in\mathcal B_i(o)\), and a realized rival submission set \(\mathcal B_{-i}(o)\). As in the proof of ex-post copied-code Sybil resistance, any copied-code deviation \(\sigma_i=(\hat b_1,\ldots,\hat b_k)\) satisfies
\[
\beta_{i,\tau(o)}(z_{\hat b_\ell})
\le
\beta_{i,\tau(o)}(z_b)
\quad
\text{for every } \ell.
\]
Hence no copied submission can outbid the original submission; and if it ties the original submission, it cannot have higher tie-breaking priority.

If \(b\) loses, every copied submission also loses, so the validator payoff is determined by the same rival winner and the deviating searcher receives zero. If \(b\) wins, the searcher--validator combined payoff is \(V_o z_b\). Under a copied-code deviation, if a copied submission wins, its certified normalized net value is at most \(z_b\), so the combined payoff is at most \(V_o z_b\). If instead a rival submission wins, then its transfer offer is at most the original transfer offer of \(b\), which is at most \(V_o z_b\) by feasibility of \(\beta_{i,\tau(o)}\). Therefore, the copied-code deviation cannot increase the combined payoff of the deviating searcher and the validator.
\end{proof}

\begin{corollary}[Bayesian validator--searcher coalition resistance of the FPA]
Under the assumptions of the preceding theorem, the FPA is Bayesian validator--searcher coalition-resistant under copied-code deviations.
\end{corollary}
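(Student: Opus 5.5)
The statement to prove is the final corollary, Bayesian validator--searcher coalition resistance of the FPA. The plan is to derive it directly from the preceding ex-post theorem by integrating a pointwise inequality. The only real work is to check that the conditioning in the Bayesian definition is compatible with the state-by-state quantification in the ex-post statement.

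Fix a real searcher \(i\), a value \(z\in[\underline z_{i,\tau(o)},\overline z_{i,\tau(o)}]\), an original submission \(b\) with \(z_b=z\), and a feasible copied-code deviation \(\sigma_i\) from \(b\). Under the deadline-based information structure, \(\sigma_i\) is chosen before the rivals are realized. Hence \(\sigma_i\) is a fixed collection of submissions and does not depend on \(\mathcal B_{-i}(o)\), and the only randomness in both sides of the Bayesian inequality is the rival profile \(\mathcal B_{-i}(o)\). That profile is determined by the \(D_{j,o}\), the \(z^\star_{j,o}\) for \(j\neq i\), and the rivals' own submissions.

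For every realization of \(\mathcal B_{-i}(o)\), the ex-post theorem gives the pointwise inequality
\[
U_i^{\mathrm{FPA}}(\{b\}\cup\mathcal B_{-i}(o))+U_{\mathrm{val}}^{\mathrm{FPA}}(\{b\}\cup\mathcal B_{-i}(o))\ge U_i^{\mathrm{FPA}}(\sigma_i\cup\mathcal B_{-i}(o))+U_{\mathrm{val}}^{\mathrm{FPA}}(\sigma_i\cup\mathcal B_{-i}(o)).
\]
This holds in particular on the event \(\{D_{i,o}=1,\ z^\star_{i,o}=z\}\). Taking conditional expectations on both sides and using monotonicity of conditional expectation yields exactly the Bayesian inequality.

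The expectations are well defined because all payoffs are bounded. The searcher payoff and validator payoff each lie in \([0,V_o]\): the bid is at most \(V_o z_b\le V_o\) by feasibility of \(\beta_{i,\tau(o)}\), and the winner's retained surplus is at most \(V_o\). Measurability is not an issue since the mechanism is a measurable function of finitely many values and messages.

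The main point needing care is that the Bayesian definition quantifies over deviations fixed at the interim stage, while the ex-post theorem quantifies over all deviations for every rival set. The ex-post statement is therefore strictly stronger, and no further obstacle arises. The same argument, with the validator term dropped, gives the preceding Sybil-resistance corollary.
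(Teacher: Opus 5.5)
Your proposal is correct and follows the paper's route. The paper obtains this corollary directly from the ex-post coalition-resistance theorem, as it does for the Sybil corollary: a state-by-state inequality is preserved when you take the interim conditional expectation given \(D_{i,o}=1,\ z^\star_{i,o}=z\). Your remarks on bounded payoffs and on \(\sigma_i\) being fixed before the rival profile is realized are harmless details that the paper leaves implicit.
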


These results are specific to copied-code deviations under the same searcher-specific bidding function and the fixed-priority tie-breaking rule. The FPA remains non-incentive-compatible as a bidding mechanism: a searcher may prefer a shaded transfer offer to bidding its certified net value. However, once all submissions controlled by the same real searcher use the same feasible bidding function, with non-decreasing bids and non-decreasing retained surplus, copied or degraded submissions cannot outbid the original submission, improve tie-breaking priority, or increase retained surplus. Hence copied-code identity splitting does not improve either the searcher's payoff or the searcher--validator combined payoff.

\section{Relation to the Sybil-Proofness Results of Pan et al.}
\label{app:pan-comparison}

We briefly clarify the relation between our FPA security result and the characterization of Pan et al.~\cite{Pan2026SybilProof}, and discuss the applicability of their Bayesian construction to the value distributions considered in this paper.

\subsection{FPA and Ex-Post Sybil Resistance}
Pan et al. show that a payment-normalized, non-wasteful, symmetric, incentive-compatible, and ex-post Sybil-proof direct mechanism must be a second-price auction with symmetric tie-breaking. This does not contradict the ex-post copied-code Sybil resistance of the FPA established in this paper. The first-price auction does not satisfy their incentive-compatibility requirement as truthful bidding is generally not a dominant strategy in the FPA. Indeed, under the Sybil-proofness axiom used by Pan et al., the FPA itself is resistant to adding identities when the original identity continues to bid its true value \(v_i\): an additional bid below \(v_i\) cannot beat the original bid, whereas an additional bid at or above \(v_i\) gives non-positive utility if it wins. Thus, it is incentive compatibility, rather than Sybil-proofness, that excludes the FPA from their characterization.

\subsection{Bayesian Lottery and Truncated-Lognormal Values}
Pan et al. relax ex-post Sybil-proofness and construct a Bayesian Sybil-proof mechanism using an increasing ticket price. To avoid confusion with our searcher-specific distributions, let \(G\) and \(g\) denote the common value CDF and density in their construction, with upper endpoint \(\bar z\). With \(n\) submitted bids, their ticket price is
\[
\begin{split}
& a_n
=
\max\left\{
y^\star,\,
\mathbb E[Z_{n-1}],\,
G^{-1}\!\left(\frac{n-2}{n-1}\right)
\right\},
\\
& y^\star
=
\min\left\{
y:
(\bar z-x)g(x)\le 1-G(x)
\ \forall x\in[y,\bar z]
\right\},   
\end{split}
\]
where \(Z_{n-1}\) is the maximum of \(n-1\) independent draws from \(G\). If at least two bids are at least \(a_n\), the mechanism runs a uniform lottery among those bidders at price \(a_n\); otherwise it uses a second-price auction~\cite{Pan2026SybilProof}.

To relate this condition to our model, fix a searcher class and consider
\[
C\sim
\mathrm{TruncLogNormal}(\mu,\sigma^2;[\underline c,\overline c]),
\qquad
z^\star=1-C,
\]
so that \(\bar z=1-\underline c\). When
\[
\underline c<e^{\mu-\sigma^2},
\]
the log-normal cost density is increasing near its lower endpoint, and hence the induced value density \(g\) is decreasing near \(\bar z\). This is the economically natural case in our setting, since the minimum feasible execution cost is expected to lie below the modal (most likely) execution cost. For \(x\) sufficiently close to \(\bar z\),
\[
1-G(x)
=
\int_x^{\bar z}g(t)\,dt
<
(\bar z-x)g(x),
\]
which violates the upper-tail condition required for any \(y^\star<\bar z\). Consequently \(y^\star=\bar z\), and therefore \(a_n=\bar z\). Since the value distribution is continuous, \(\Pr(z^\star\ge a_n)=0\). The lottery branch then occurs with probability zero, so the increasing-ticket-price mechanism reduces almost surely to the standard second-price auction.

\section{Security Analysis of the Entry-Filtered SCA}
\label{app:sca-theta-security}
The entry-filtered SCA distributes rewards only among admitted submissions. A copied-code deviation may leave zero, one, or multiple copied submissions admitted after the entry filter is applied. We therefore characterize security through Bayesian interim payoff inequalities, with all payoff functionals evaluated on the resulting admitted profile.

Fix a searcher \(i\) with realized certified normalized value \(z_{i,o}^\star=z\), and let \(\mathcal B_{-i}\) denote the random rival submission profile induced by the participation and value primitives. Let \(b_i(z)\) denote \(i\)'s original submission. For \(q\ge2\), define the feasible copied-value set as \(\mathcal Z_q^c(z):=\{(z_1^c,\ldots,z_q^c):0\le z_\ell^c\le z,\ \ell=1,\ldots,q\}\). For \(\mathbf z^c=(z_1^c,\ldots,z_q^c)\in\mathcal Z_q^c(z)\), let \(C_{i,q}(\mathbf z^c)\) denote \(q\) copied-code submissions with certified normalized values \(z_1^c,\ldots,z_q^c\).

Let \(r_\theta(\mathbf z^c):=\sum_{\ell=1}^{q}\mathbf 1\{z_\ell^c\ge\theta\}\) denote the number of copied submissions that pass the entry filter. Here, \(q\ge2\) is the number of submitted Sybil identities, whereas \(r_\theta(\mathbf z^c)\in\{0,\ldots,q\}\) is the number of reward-relevant copies. In particular, \(r_\theta(\mathbf z^c)=0\) is payoff-equivalent to non-submission, while \(r_\theta(\mathbf z^c)=1\) is payoff-equivalent to replacing the original submission by one admitted, possibly degraded, submission.

Define the original-submission and worst-case \(q\)-copy interim searcher payoffs by
\[
\mathcal T_i^{S,\theta}(z;\kappa_S)
:=
\mathbb E\!\left[
U_i^{\mathrm{SCA}_\theta}
\bigl(\{b_i(z)\}\cup\mathcal B_{-i};\kappa_S\bigr)
\mid
D_{i,o}=1,\ z_{i,o}^\star=z
\right],
\]
and
\[
\mathcal S_{i,q}^{S,\theta}(z;\kappa_S)
:=
\sup_{\mathbf z^c\in\mathcal Z_q^c(z)}
\mathbb E\!\left[
U_i^{\mathrm{SCA}_\theta}
\bigl(C_{i,q}(\mathbf z^c)\cup\mathcal B_{-i};\kappa_S\bigr)
\mid
D_{i,o}=1,\ z_{i,o}^\star=z
\right].
\]

For validator--searcher coalitions, define
\[
\begin{aligned}
\mathcal T_i^{C,\theta}(z;\kappa_S,\kappa_V)
:=
\mathbb E\!\Big[
&
U_i^{\mathrm{SCA}_\theta}
\bigl(\{b_i(z)\}\cup\mathcal B_{-i};\kappa_S\bigr)
+
U_{\mathrm{val}}^{\mathrm{SCA}_\theta}
\bigl(\{b_i(z)\}\cup\mathcal B_{-i};\kappa_V\bigr)
\\
&\mid
D_{i,o}=1,\ z_{i,o}^\star=z
\Big],
\end{aligned}
\]
and
\[
\begin{aligned}
\mathcal S_{i,q}^{C,\theta}(z;\kappa_S,\kappa_V)
:=
\sup_{\mathbf z^c\in\mathcal Z_q^c(z)}
\mathbb E\!\Big[
&
U_i^{\mathrm{SCA}_\theta}
\bigl(C_{i,q}(\mathbf z^c)\cup\mathcal B_{-i};\kappa_S\bigr)
+
U_{\mathrm{val}}^{\mathrm{SCA}_\theta}
\bigl(C_{i,q}(\mathbf z^c)\cup\mathcal B_{-i};\kappa_V\bigr)
\\
&\mid
D_{i,o}=1,\ z_{i,o}^\star=z
\Big].
\end{aligned}
\]

If \(z<\theta\), the original submission and every feasible copied submission are rejected. Hence \(\mathcal T_i^{S,\theta}(z;\kappa_S)=\mathcal S_{i,q}^{S,\theta}(z;\kappa_S)=0\). Moreover, both the original and deviating coalition profiles induce the same admitted rival profile, so
\[
\mathcal T_i^{C,\theta}(z;\kappa_S,\kappa_V)
=
\mathcal S_{i,q}^{C,\theta}(z;\kappa_S,\kappa_V)
=
\mathbb E\!\left[
U_{\mathrm{val}}^{\mathrm{SCA}_\theta}
(\mathcal B_{-i};\kappa_V)
\mid
D_{i,o}=1,\ z_{i,o}^\star=z
\right].
\]
Thus, only \(z\ge\theta\) generates nontrivial security constraints.

\begin{theorem}[Bayesian security characterization]
\label{lem:bayesian-security-entry}
The entry-filtered SCA satisfies Bayesian copied-code Sybil resistance and Bayesian validator--searcher coalition resistance if and only if, for every searcher \(i\), every realized value \(z\) in its support, and every \(q=2,\ldots,H+1\),
\[
\mathcal T_i^{S,\theta}(z;\kappa_S)
\ge
\mathcal S_{i,q}^{S,\theta}(z;\kappa_S),
\qquad
\mathcal T_i^{C,\theta}(z;\kappa_S,\kappa_V)
\ge
\mathcal S_{i,q}^{C,\theta}(z;\kappa_S,\kappa_V).
\]
\end{theorem}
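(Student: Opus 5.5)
The proof splits into the two directions of the ``if and only if''. Necessity is immediate. For any \(q\in\{2,\ldots,H+1\}\) and any \(\mathbf z^c\in\mathcal Z_q^c(z)\), the profile \(C_{i,q}(\mathbf z^c)\) is a \(q\)-submission copied-code Sybil deviation from \(b_i(z)\) in the sense of the definition. Bayesian copied-code Sybil resistance (respectively, coalition resistance) therefore gives \(\mathcal T_i^{S,\theta}\) (respectively \(\mathcal T_i^{C,\theta}\)) at least the corresponding expectation for every such \(\mathbf z^c\). Taking the supremum over \(\mathbf z^c\) yields the displayed inequalities.

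Sufficiency is the substantive direction. A Bayesian copied-code Sybil deviation is a collection of \(k\ge2\) copies with values in \([0,z]\). Since the copy profile is chosen before rivals are realized, it is exactly some \(C_{i,k}(\mathbf z^c)\) with \(\mathbf z^c\in\mathcal Z_k^c(z)\), and its interim payoff is at most \(\mathcal S_{i,k}^{\cdot,\theta}\). The hypothesis therefore covers every deviation with \(k\le H+1\). It remains to show that any deviation with \(k\ge H+2\) is dominated, pathwise in the rival profile, by one with at most \(H+1\) copies.

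The plan for this reduction is as follows.
\begin{itemize}
\item Only the \(r=r_\theta(\mathbf z^c)\) admitted copies matter, since rejected copies are payoff-irrelevant. If \(r\le H+1\), replace each rejected copy by a copy of value \(0\) and drop all but enough of them to reach \(\max\{2,r\}\le H+1\) copies. This leaves the admitted profile, and hence every payoff, unchanged state by state.
\item If \(r\ge H+2\), then in every state \(t_\theta\ge r>H+1\), so the mechanism is in the fallback branch. Searcher \(i\) receives at most \(\frac{\eta_H^S}{H(t_\theta+1)}V_oz_{(1)}^\theta\), and only when a copy is top-ranked. The validator receives \(\frac{H\eta_H^V}{H+1}V_oz_{(1)}^\theta\), which does not depend on \(t_\theta\).
\item Compare this with the deviation that keeps only the \(H+1\) highest-valued admitted copies. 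That profile is still in the fallback branch in every state, and \(z_{(1)}^\theta\) and the top-ranked submission are the same, because the maximal copy is kept and tie-breaking priority is preserved. Since \(t_\theta\) has decreased, the searcher's payment \(\frac{1}{t_\theta+1}\) weakly increases, and the validator's payment is identical.
\end{itemize}
Thus the pruned deviation weakly dominates in every state, for both the searcher payoff and the coalition payoff. Taking conditional expectations, the \(k\)-copy interim payoff is at most \(\mathcal S_{i,H+1}^{\cdot,\theta}\le\mathcal T_i^{\cdot,\theta}\).

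Finally, I would note the case \(z<\theta\), already handled before the statement: there both sides coincide, so those constraints hold trivially.

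The main obstacle is the fallback-branch comparison. One must check that pruning to exactly \(H+1\) copies keeps \(t_\theta>H\) in every state, so that no state moves back into the Shapley branch, where the payoff could be larger. One must also check that the identity of the winner and the value \(z_{(1)}^\theta\) are unchanged, including under ties with rivals. Fixed arrival-order priority and keeping the highest-priority maximal copy should settle both points. A secondary point is that, in the \(r\le H+1\) case, padding with zero-valued copies is feasible in \(\mathcal Z_q^c(z)\) and is payoff-neutral.
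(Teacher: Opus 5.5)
Your proposal is correct and follows the paper's own route. Necessity comes from taking suprema over \(\mathbf z^c\). For sufficiency, any deviation with \(r=r_\theta(\mathbf z^c)\le H+1\) admitted copies is represented using \(\max\{2,r\}\) submitted copies. Otherwise it is pruned to \(H+1\) admitted copies, retaining the highest-value, highest-priority copy, so that the fallback branch persists, the validator payment is unchanged, and the searcher's \(1/(t_\theta+1)\) factor weakly increases. Your explicit check that the pruned profile stays in the fallback branch for every rival state is a step the paper leaves implicit. Your zero-padding is harmless, because padding is needed only when some copy was rejected, which forces \(\theta>0\).
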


\begin{proof}
First suppose that the entry-filtered SCA satisfies Bayesian copied-code Sybil resistance and Bayesian validator--searcher coalition resistance. Then no copied-code deviation can increase either the deviating searcher's interim payoff or the searcher--validator coalition's interim payoff. Since \(\mathcal S_{i,q}^{S,\theta}\) and \(\mathcal S_{i,q}^{C,\theta}\) are defined as suprema over all \(\mathbf z^c\in\mathcal Z_q^c(z)\), the displayed inequalities must hold for every \(i\), every \(z\), and every \(q=2,\ldots,H+1\).

Conversely, suppose that the displayed inequalities hold for every \(i\), every \(z\), and every \(q=2,\ldots,H+1\). Consider any copied-code deviation with \(q\ge2\) submissions and copied-value profile \(\mathbf z^c\in\mathcal Z_q^c(z)\). Let \(r:=r_\theta(\mathbf z^c)\) be the number of admitted copies. Copies below \(\theta\) do not affect the admitted profile or any payoff.

If \(r\le H+1\), the same admitted profile can be represented using \(q'=\max\{2,r\}\le H+1\) submitted copies, adding rejected copies when \(r<2\). This representation includes the withdrawal-equivalent case \(r=0\) and the one-admitted-copy case \(r=1\). Its searcher and coalition payoffs are therefore bounded above by \(\mathcal S_{i,q'}^{S,\theta}(z;\kappa_S)\) and \(\mathcal S_{i,q'}^{C,\theta}(z;\kappa_S,\kappa_V)\), respectively. The assumed inequalities rule out profitability.

It remains to consider \(r>H+1\). The admitted copies alone then force the fallback branch for every rival profile. Retain only \(H+1\) admitted copies, including the copy with the highest certified value and, among ties, the highest tie-breaking priority. This preserves the deviator's best possible winning copy while weakly reducing the total admitted population \(t_\theta\). In the fallback branch, the validator payoff depends on the highest admitted certified value but not on \(t_\theta\), and is therefore unchanged. The deviating searcher's payoff is zero if it does not win; if it wins, its payment is proportional to \(1/(t_\theta+1)\), which weakly increases when copies are removed. Thus, the original deviation yields no larger searcher or coalition payoff than a deviation with \(H+1\) admitted copies.

Every copied-code deviation is therefore bounded above by a deviation covered by \(q=2,\ldots,H+1\). Since all such deviations satisfy the displayed inequalities, no copied-code deviation is profitable for either the searcher alone or the searcher--validator coalition.
\end{proof}

The feasible security-calibration region is
\[
\mathcal K_{\mathrm{sec}}^\theta
=
\left\{
(\kappa_S,\kappa_V)\in\mathbb R_+^2:
\begin{array}{l}
\mathcal T_i^{S,\theta}(z;\kappa_S)
\ge
\mathcal S_{i,q}^{S,\theta}(z;\kappa_S),\\
\mathcal T_i^{C,\theta}(z;\kappa_S,\kappa_V)
\ge
\mathcal S_{i,q}^{C,\theta}(z;\kappa_S,\kappa_V),\\
\forall i,\ 
\forall z\in\operatorname{supp}(F_{i,\tau(o)}),\
\forall q\in\{2,\ldots,H+1\}
\end{array}
\right\}.
\]
If \(\mathcal K_{\mathrm{sec}}^\theta\) is nonempty and the optimum is attained, a least-burn implementation selects a pair in \(\mathcal K_{\mathrm{sec}}^\theta\) that maximizes expected total paid surplus, equivalently minimizing expected burn subject to both Bayesian security requirements.

\section{Additional Numerical Analysis and Validator-Side Security Calibration}
\label{app:revenue-decomposition}

\subsection{Sensitivity to Elite Execution Advantage}

Figure~\ref{fig:centralization-sca-fpa} compares the FPA and the entry-filtered SCA as the execution-quality advantage of a single elite searcher increases. We consider a three-class truncated-lognormal environment with one elite, nine mid-tier, and ten tail searchers. The mid-tier class has participation probability \(0.55\), median cost \(\tilde c_{\mathrm{mid}}=0.30\), and support \([0.12,0.55]\), while the tail class has participation probability \(0.35\), median cost \(0.55\), and support \([0.25,0.85]\). The elite searcher has participation probability \(0.95\) and support \([0.02,0.70]\).

We vary the elite cost advantage
\[
A:=\frac{\tilde c_{\mathrm{mid}}}{\tilde c_{\mathrm{elite}}},
\qquad
\tilde c_{\mathrm{elite}}=\frac{0.30}{A},
\]
so that larger \(A\) corresponds to a stronger elite execution-quality advantage. For each \(A\), the entry threshold is chosen so that
\[
\mathbb E[N_o^\theta]=6,
\]
the cap is set to \(H=9\), and \(\kappa_S^\star\) is calibrated against the full Bayesian copied-code deviation set over all realized values \(z\), copy sizes \(q=2,\ldots,H+1\), and feasible copied-value profiles \(\mathbf z^c\in\mathcal Z_q^c(z)\).

\begin{figure}[t]
\centering
\includegraphics[width=0.8\textwidth]{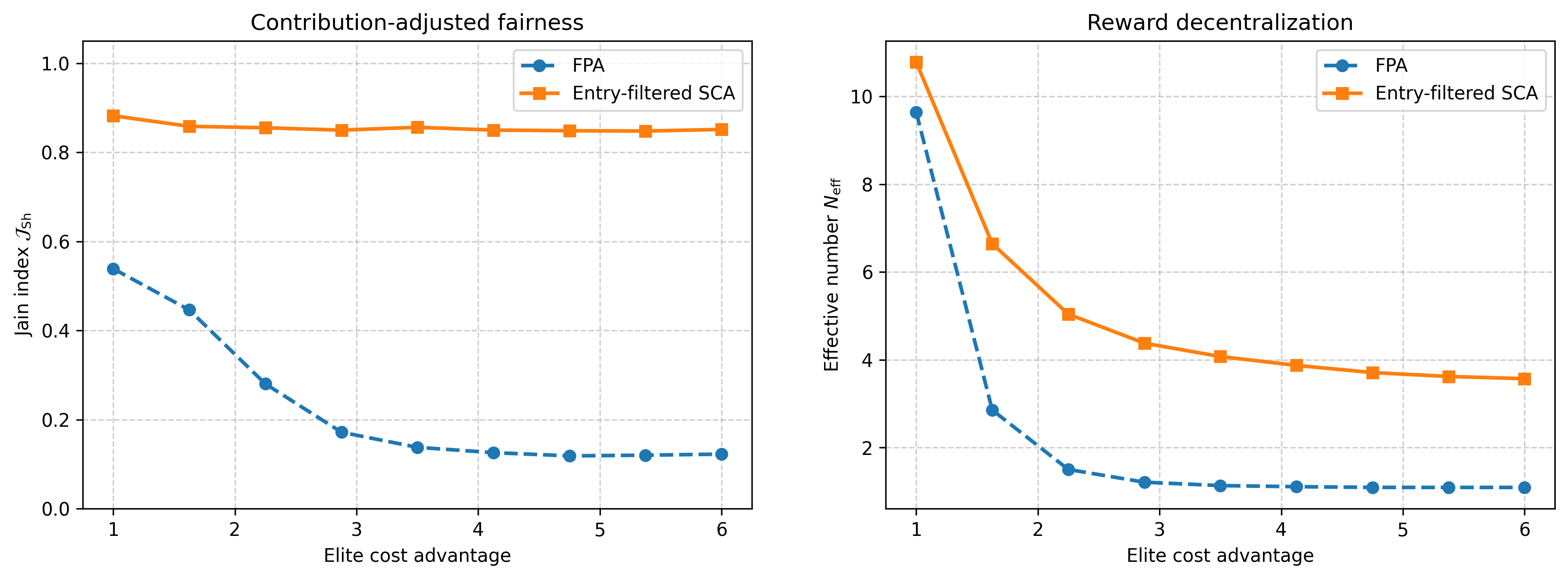}
\caption{Effect of increasing elite execution advantage on fairness and reward decentralization.}
\label{fig:centralization-sca-fpa}
\end{figure}

\subsection{Revenue Decomposition and Validator-Side Security Calibration}

This appendix calibrates the validator-side security parameter of the entry-filtered SCA and decomposes gross opportunity value into execution cost, searcher payments, validator payments, and burn. The calibration imposes Bayesian coalition resistance over the full copied-value deviation set defined in Appendix~\ref{app:sca-theta-security}, including withdrawal-equivalent, one-admitted-copy, and identity-expansion deviations.

We consider the centralized, low-admission environment summarized in Table~\ref{tab:validator-calibration-environment}. It consists of one elite searcher, seven mid-tier searchers, and ten tail searchers, with class-specific participation probabilities and truncated-lognormal execution-cost distributions. Throughout this calibration, \(\sigma=0.35\) and \(z_{i,o}^\star=1-C_i\).

\begin{table}[b]
\centering
\caption{Centralized low-admission environment used for validator-side security calibration.}
\label{tab:validator-calibration-environment}
\small
\begin{tabular}{l c c c}
\hline
\textbf{Class}
& \(\boldsymbol{\#}\)
& \(\boldsymbol{\alpha}\)
& \(\boldsymbol{\tilde c,[\underline c,\overline c]}\) \\
\hline
Elite & 1  & 0.98 & \(0.08,\,[0.02,0.25]\) \\
Mid   & 7  & 0.45 & \(0.30,\,[0.15,0.55]\) \\
Tail  & 10 & 0.15 & \(0.55,\,[0.35,0.85]\) \\
\hline
\end{tabular}
\end{table}

Under these parameters, the raw expected active count is \(\mathbb E[N_o]=5.63\). We choose the entry threshold so that \(\mathbb E[N_o^\theta]=4\), which yields \(\theta=0.5730129\), and set the cap to \(H=7\).

We first fix the searcher-side parameter at the smallest value satisfying all Bayesian copied-code Sybil-resistance constraints, \(\kappa_S^\star=0.2137025\). For a searcher \(i\) with realized value \(z\), a copy size \(q\), and a feasible copied-value profile \(\mathbf z^c\in\mathcal Z_q^c(z)\), define the Bayesian coalition gain by
\[
\begin{aligned}
\Gamma_{i,q}^{C,\theta}
(z,\mathbf z^c;\kappa_V)
:=
\mathbb E\!\Big[
&
U_i^{\mathrm{SCA}_\theta}
\bigl(C_{i,q}(\mathbf z^c)\cup\mathcal B_{-i};
      \kappa_S^\star\bigr)
+
U_{\mathrm{val}}^{\mathrm{SCA}_\theta}
\bigl(C_{i,q}(\mathbf z^c)\cup\mathcal B_{-i};
      \kappa_V\bigr)
\\
&\mid
D_{i,o}=1,\ z_{i,o}^\star=z
\Big]
-
\mathcal T_i^{C,\theta}
(z;\kappa_S^\star,\kappa_V).
\end{aligned}
\]
The copied-value profile is selected before the rival profile \(\mathcal B_{-i}\) is realized. Accordingly, Bayesian validator--searcher coalition resistance requires
\[
\sup_{\substack{
i,\;
z\in\operatorname{supp}(F_{i,\tau(o)})\cap[\theta,\infty),\\
q=2,\ldots,H+1,\;
\mathbf z^c\in\mathcal Z_q^c(z)
}}
\Gamma_{i,q}^{C,\theta}
(z,\mathbf z^c;\kappa_V)
\le0.
\]
Because copied identities are interchangeable, the numerical maximization may, without loss of generality, restrict attention to profiles satisfying \(z_1^c\ge\cdots\ge z_q^c\).

Recall that \(r_\theta(\mathbf z^c)\) denotes the number of admitted copies. The full coalition constraint produces opposing restrictions on \(\kappa_V\).

First, deviations with \(r_\theta(\mathbf z^c)\ge2\) expand the admitted population relative to the original single submission. Additional identities strengthen the payment discounts and generally increase burn, but may also increase the coalition's gross mechanism allocation. Consequently, \(\kappa_V\) must be sufficiently large to prevent the gross coalition gain from dominating the additional discount.

In this environment, the binding identity-expansion deviation is made by a Tail searcher at \(z=\theta\), with \(q=7\), \(r_\theta(\mathbf z^c)=7\), and \(\mathbf z^c=(\theta,\ldots,\theta)\). Let \(M\) denote the number of admitted rival submissions. Under this deviation, \(t_\theta=M+7\). Hence, the Shapley branch applies only when \(M=0\), while the fallback branch applies whenever \(M\ge1\). Both cases, together with the realized rival values, are included in the Bayesian conditional expectation. This deviation imposes the lower bound \(\kappa_V\ge\underline\kappa_V=0.0618379\). At the boundary, the original-submission and deviating normalized Bayesian coalition payoffs are both approximately \(0.550652\).

Profiles with \(r_\theta(\mathbf z^c)=1\), including deviations with one admitted degraded copy and any number of rejected copies, are included in the supremum but do not bind in this calibration.

Second, the feasible set permits \(r_\theta(\mathbf z^c)=0\), in which every copied submission falls below \(\theta\). Although formally represented by \(q\ge2\) copied submissions, this deviation is payoff-equivalent to non-submission by the searcher. Removing the searcher may reduce the coalition's gross mechanism allocation, but it also reduces the admitted population, weakens the validator discount, and lowers burn. If \(\kappa_V\) is too large, the reduction in discounting dominates and makes the withdrawal-equivalent deviation profitable. The binding upper-bound deviation is associated with a Tail searcher at \(z=\theta\) and imposes \(\kappa_V\le\overline\kappa_V=0.0665793\). At this boundary, the original-submission and deviating normalized Bayesian coalition payoffs are both approximately \(0.540782\).

The numerically identified Bayesian coalition-security region, conditional on \(\kappa_S=\kappa_S^\star\), is
\[
\mathcal K_V^{C,\theta}(\kappa_S^\star)
=
[0.0618379,\;0.0665793].
\]
Identity expansion imposes the lower bound on \(\kappa_V\), whereas the withdrawal-equivalent deviation imposes the upper bound. Since the honest validator payment is nonincreasing in \(\kappa_V\), the least-burn secure calibration conditional on \(\kappa_S^\star\) is the lower endpoint,
\[
{\kappa_V^\star=0.0618379.}
\]
The conditional expectations are evaluated over the participation and truncated-lognormal value primitives, and the supremum over copied-value profiles is solved numerically. The reported endpoints are stable to six decimal places under the stated grid refinement.

To report the resulting value decomposition, define \(\overline z_t:=\mathbb E[z_{(1)}^\theta\mid t_\theta=t]\) and
\[
\overline\Phi_{\mathrm{val},t}
:=
\mathbb E\!\left[
\frac{\Phi_{\mathrm{val}}^\theta(z_o^\theta)}{V_o}
\Bigm|
t_\theta=t
\right].
\]
For \(t\le H\), the conditional expected normalized payments are
\[
S_t
=
e^{-\kappa_S^\star(t-1)}
\left(
\overline z_t-\overline\Phi_{\mathrm{val},t}
\right),
\qquad
V_t
=
e^{-\kappa_V^\star(t-1)}
\overline\Phi_{\mathrm{val},t}.
\]
For \(t>H\), the fallback payments are
\[
S_t
=
e^{-\kappa_S^\star(H-1)}
\frac{\overline z_t}{H(t+1)},
\qquad
V_t
=
e^{-\kappa_V^\star(H-1)}
\frac{H}{H+1}\overline z_t.
\]
In both branches, \(B_t=\overline z_t-S_t-V_t\) and \(C_t=1-\overline z_t\), where \(B_t\) is burn and \(C_t\) is execution cost.

Table~\ref{tab:sca-theta-gross-value-decomposition} reports the conditional decomposition for \(t_\theta=1,\ldots,H+1\), where \(t_\theta=H+1\) is the first fallback state.

\begin{table}[t]
\centering
\caption{Conditional gross-value decomposition under the entry-filtered SCA with \(\kappa_S^\star=0.2137025\) and \(\kappa_V^\star=0.0618379\). All entries are normalized by \(V_o\) and conditional on the admitted count \(t_\theta\).}
\label{tab:sca-theta-gross-value-decomposition}
\resizebox{\textwidth}{!}{
\begin{tabular}{c c c c c c}
\hline
\(\boldsymbol{t_\theta}\)
& \textbf{Branch}
& \textbf{Searcher payment}
& \textbf{Validator payment}
& \textbf{Burn/loss}
& \textbf{Execution cost}
\\
\hline
1 & Shapley  & 0.4483 & 0.4483 & 0.0000 & 0.1035\\
2 & Shapley  & 0.2717 & 0.5376 & 0.0990 & 0.0918\\
3 & Shapley  & 0.1798 & 0.5619 & 0.1699 & 0.0884\\
4 & Shapley  & 0.1251 & 0.5612 & 0.2268 & 0.0869\\
5 & Shapley  & 0.0898 & 0.5488 & 0.2753 & 0.0861\\
6 & Shapley  & 0.0659 & 0.5303 & 0.3181 & 0.0856\\
7 & Shapley  & 0.0492 & 0.5087 & 0.3567 & 0.0853\\
8 & Fallback & 0.0040 & 0.5524 & 0.3585 & 0.0851\\
\hline
\end{tabular}}
\end{table}

Before rounding, every row satisfies \(C_t+S_t+V_t+B_t=1\). Within the Shapley branch, the searcher discount decreases rapidly with \(t_\theta\), causing searcher payments to fall and burn to rise. The validator payment is nonmonotone because a larger admitted population may increase the validator's gross allocation while simultaneously strengthening its discount. At \(t_\theta=H+1\), the fallback branch sharply reduces the searcher payment and assigns the validator a fixed discounted fraction of the highest certified net value.

\end{document}